\documentclass[12pt]{article}

\usepackage[margin=1in]{geometry}
\usepackage{amsmath, amssymb, amsthm}
\usepackage{xspace}
\usepackage{enumerate}
\usepackage{booktabs}
\usepackage{longtable}
\usepackage{tikz}
\usepackage{natbib}
\usepackage[colorlinks=true,linkcolor=blue!80!black,citecolor=blue!80!black,urlcolor=blue!80!black]{hyperref}

\usetikzlibrary{positioning, fit, backgrounds, calc}

\newtheorem{theorem}{Theorem}
\newtheorem{lemma}[theorem]{Lemma}
\newtheorem{proposition}[theorem]{Proposition}
\newtheorem{corollary}[theorem]{Corollary}

\theoremstyle{definition}

\newtheorem{example}[theorem]{Example}

\theoremstyle{remark}
\newtheorem{remark}[theorem]{Remark}

\newtheorem*{axiom-ce}{Component Efficiency (\textbf{CE})}
\newtheorem*{axiom-bc}{Balanced Contributions (\textbf{BC})}
\newtheorem*{axiom-f}{Fairness (\textbf{F})}
\newtheorem*{axiom-bcplus}{Pair-wise Balanced Contributions (\textbf{BC\textsuperscript{+}})}
\newtheorem*{axiom-s}{Symmetry (\textbf{S})}
\newtheorem*{axiom-l}{Linearity (\textbf{L})}

\newcommand{\bce}{\mathrm{BCE}}
\newcommand{\jw}{\mathrm{JW}}
\newcommand{\fce}{\mathrm{FCE}}
\newcommand{\Sh}{\mathrm{Sh}}
\newcommand{\My}{\mathrm{My}}

\newcommand{\CE}{component efficiency~(\textbf{CE})\xspace}
\newcommand{\BC}{balanced contributions~(\textbf{BC})\xspace}
\newcommand{\Fair}{fairness~(\textbf{F})\xspace}
\newcommand{\BCplus}{pair-wise balanced contributions~(\textbf{BC\textsuperscript{+}})\xspace}
\newcommand{\Sym}{symmetry~(\textbf{S})\xspace}

\tikzset{
	player/.style={circle, draw, thick, fill=white, inner sep=0pt, minimum size=20pt, font=\small},
	playerT/.style={circle, draw, thick, fill=blue!12, inner sep=0pt, minimum size=20pt, font=\small\bfseries},
	edge/.style={thick, gray!70},
}

\begin{document}

\title{Balanced Contributions in Networks and Games with Externalities}

\author{Frank Huettner\thanks{SKK GSB, Sungkyunkwan University, Seoul, South Korea. Email: mail@frankhuettner.de}}

\date{\today}

\maketitle

\begin{abstract}
For networks with externalities, where each component's worth may depend on the full network structure, balanced contributions and fairness lead to distinct component-efficient allocation rules.
We characterize the unique component-efficient allocation rule satisfying balanced contributions --- the BCE rule.
Existence is the main challenge: balanced contributions must hold on every edge, but the construction uses only spanning-tree edges.
A cycle-sum identity bridges this gap by reducing balanced contributions on non-tree edges to relations in proper subnetworks.
The BCE rule coincides with the Myerson value for TU games and with its generalization by Jackson--Wolinsky for network games without externalities, it recovers the externality-free value on the complete network, and --- unlike the fairness-based FCE rule --- it does not reduce to a graph-free formula applied to the graph-restricted game.

\medskip\noindent\textbf{Keywords:} networks with externalities; balanced contributions; component efficiency; Myerson value; worth function

\medskip\noindent\textbf{JEL Classification:} C71, D62, D85
\end{abstract}

%% ============================================================
\section{Introduction}\label{sec:intro}
%% ============================================================

\citet{myerson1977b} introduced a graph-based model of cooperation in which players are linked by a communication network, and only connected groups can cooperate effectively.
For classical TU games, \citet{myerson1977b} characterized an allocation rule by \emph{fairness} and component efficiency; \citet{myerson1980} showed that component efficiency and \emph{balanced contributions} yields the same value.
Both axioms require connected players to benefit equally from their connection, but they impose different threats: fairness concerns removing a single link, while balanced contributions concerns a player's complete withdrawal.\footnote{\citet{osborne1994}, Ch.~14.4, derive balanced contributions from objections and counterobjections to proposed payoffs: player~$i$ objects by threatening to leave the game, and player~$j$ counters by noting that $j$'s own departure would hurt~$i$ at least as much. The axiom requires that every such objection has a counterobjection.}

\citet{jackson1996} enriched the model by allowing the worth of a component to depend on the full network structure, not just the coalition itself, and characterized the allocation rule satisfying component efficiency and fairness; \citet{slikker2007} showed that balanced contributions again yields the same rule.
\citet{navarro2007} extended the model further, permitting externalities across components, and characterized the allocation rule satisfying fairness and component efficiency (henceforth FCE rule).

With externalities, balanced contributions can capture indirect dependences that fairness misses, and leads to different payoffs.
For example, consider the networks $g$ and $g'$ in Figure~\ref{fig:intro-bc-vs-f}, and suppose player~$3$ gets a dollar whenever players~$1$ and~$2$ are linked; no other worth is generated.
In~$g$, player~$3$ is isolated and keeps the dollar by component efficiency.
In~$g'$, the link~$\{1,3\}$ connects player~$3$ to player~$1$, whose presence sustains the $\{1,2\}$-partnership: if player~$1$ withdraws, the partnership is destroyed, and player~$3$'s payoff vanishes.
Balanced contributions accounts for this dependence and redistributes payoffs; fairness, which only considers removing the single link~$\{1,3\}$ when balancing players~$1$ and~$3$, does not.

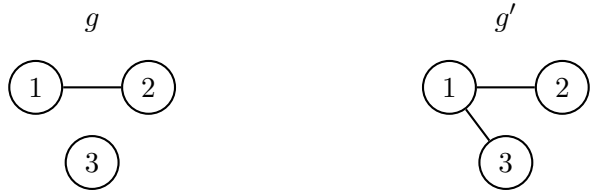
\begin{figure}[ht]
	\centering
	\begin{tikzpicture}[scale=1.0, every node/.style={font=\small}]
		\begin{scope}
			\node[anchor=south, font=\small\bfseries] at (0.75,1.6) {$g$};
			\node[player] (1) at (0,1) {$1$};
			\node[player] (2) at (1.5,1) {$2$};
			\node[player] (3) at (0.75,0) {$3$};
			\draw[thick] (1) -- (2);
		\end{scope}
		\begin{scope}[xshift=5.5cm]
			\node[anchor=south, font=\small\bfseries] at (0.75,1.6) {$g'$};
			\node[player] (1b) at (0,1) {$1$};
			\node[player] (2b) at (1.5,1) {$2$};
			\node[player] (3b) at (0.75,0) {$3$};
			\draw[thick] (1b) -- (2b);
			\draw[thick] (1b) -- (3b);
		\end{scope}
	\end{tikzpicture}
	\caption{Player~$3$ gets a dollar whenever players~$1$ and~$2$ are linked. In~$g$, player~$3$ is isolated and keeps the dollar. In~$g'$, the link~$\{1,3\}$ exposes player~$3$ to player~$1$'s threats. Balanced contributions forces equal split $(\tfrac{1}{3}, \tfrac{1}{3}, \tfrac{1}{3})$ in~$g'$; fairness leaves the payoff at $(0, 0, 1)$.}
	\label{fig:intro-bc-vs-f}
\end{figure}

However, an allocation rule satisfying balanced contributions and component efficiency in the presence of externalities has not been identified. The present paper fills this gap: we characterize the unique such rule (henceforth BCE rule), showing existence and uniqueness, and constructing it explicitly (Theorem~\ref{thm:main}). To this end, we impose balanced contributions only for directly connected players---a weaker requirement than pair-wise balanced contributions for all connected pairs, which we show to be incompatible with component efficiency. Moreover, it turns out that component efficiency, balanced contributions, and fairness are jointly incompatible in the presence of externalities.

Whereas uniqueness follows from an argument going back to \citet{myerson1977b} (within each component, one equation from component efficiency and $|C|-1$ equal gains equations along a spanning-tree pin down all $|C|$ payoffs), existence is the actual challenge.
Without externalities, existence is clear: the Shapley value of the graph-restricted game satisfies balanced contributions for all pairs of players, not just adjacent ones \citep{slikker2007}. With externalities, this pair-wise property fails, and the allocation rule defined via tree-edge balanced contributions equations must be shown to also satisfy balanced contributions on every non-tree edge.
This is established through the \emph{cycle-sum identity} (Lemma~\ref{lem:cycle-main}), which expresses balanced contributions on a non-tree edge as a signed sum of balanced contributions residuals in strict subnetworks, each zero by induction.

Section~\ref{sec:prelim} introduces the network game setting and the axioms.
Section~\ref{sec:main} constructs the BCE rule, states and proves the cycle-sum identity, and proves the main theorem.
Section~\ref{sec:consequences} investigates further properties and derives consequences. In particular, we recover the externality-free value \citep{phamdo2007,declippel2008} on the complete network without imposing a specific independence of externalities.
Section~\ref{sec:pff} connects to games in partition function form (PFF), showing how the BCE and FCE characterizations extend to this framework, and that the BCE value---unlike the FCE value---does not reduce to a graph-free formula applied to the graph-restricted PFF game.

%% ============================================================
\section{Preliminaries}\label{sec:prelim}
%% ============================================================

Let $N = \{1, \ldots, n\}$ be a finite player set.
A \emph{network} (undirected graph) $g$ on~$N$ is a set of unordered pairs $\{i,j\}$ with $i,j \in N$, $i \neq j$, called \emph{links}.
Write $g^N = \bigl\{\{i,j\} : i,j \in N,\, i \neq j\bigr\}$ for the \emph{complete} network, and $\mathcal{G}$ for the set of all networks on~$N$.

For $S \subseteq N$, write $g|_S = \bigl\{\{k,l\} \in g : k \in S,\, l \in S\bigr\}$ for the \emph{subnetwork induced by~$S$}, and $S/g$ for the partition of~$S$ into connected components of~$g|_S$.
In particular, $N/g$ is the partition of~$N$ into connected components of~$g$.
For a partition~$\mathcal{P}$ of~$N$, write $g|_{\mathcal{P}} = \bigcup_{B \in \mathcal{P}} g|_B$ for the subnetwork of~$g$ keeping only links between players in the same block of~$\mathcal{P}$.
For a subset $D \subseteq N$, write
\[
	g_{-D} \;=\; \bigl\{\{k,l\} \in g : k \notin D,\, l \notin D\bigr\}
\]
for the network with all links incident to any player in~$D$ deleted.
For a single player~$j$, write $g_{-j} = g_{-\{j\}}$.
Since link removal is set-theoretic,
$g_{-i,-j} \;=\; g_{-j,-i}.$

A \emph{cycle} of length $\ell \geq 3$ in~$g$ is a sequence of distinct players $i_0 \to i_1 \to \cdots \to i_{\ell-1} \to i_0$ with each consecutive pair linked in~$g$.
A network without cycles is a \emph{forest}; a connected forest is a \emph{tree}.

We adopt the framework of \citet{navarro2007}.
A \emph{worth function}\footnote{Following \citet{maschler2013}, we use \emph{worth} rather than \emph{value} to avoid confusion with allocation rules such as the Shapley value or the Myerson value.} is a map $w$ that assigns to each pair $(C, g)$,
where $C$ is a connected component of~$g$, a real number $w(C, g) \in \mathbb{R}$.
The worth $w(C, g)$ is available to the members of component~$C$ when the network is~$g$.
Crucially, $w(C, g)$ may depend on the full network~$g$, not only on the internal
structure $g|_C$; this captures \emph{externalities} across components.
We write $\mathcal{W}$ for the space of all worth functions.

An \emph{allocation rule} $\varphi\colon \mathcal{W}\times\mathcal{G}\to\mathbb{R}^N$
assigns a payoff $\varphi_i(w,g)\in\mathbb{R}$ to each worth function $w\in\mathcal{W}$,
network $g\in\mathcal{G}$, and player $i\in N$.
Note that both the game~$w$ and the allocation rule may depend on~$g$.

\subsection{Myerson's Equal Gains Axioms and Externalities}

Generalizing the axioms introduced by \citet{myerson1977a,myerson1980} to networks with externalities is straightforward and gives rise to the following axioms for allocation rules.

\begin{axiom-ce}\label{def:ce}
For every network~$g$ and every connected component $C \in N/g$:
\begin{equation}\label{eq:ce}
	\sum_{i \in C} \varphi_i(w, g) \;=\; w(C, g).
\end{equation}
\end{axiom-ce}

Component efficiency conveys the idea that disconnected players cannot generate surplus together, and that all jointly created worth is distributed amongst connected players.

\begin{axiom-bc}\label{def:bc}
For every network~$g$ and every link $\{i,j\} \in g$:
\begin{equation}\label{eq:bc}
	\varphi_i(w, g) - \varphi_i(w, g_{-j}) \;=\; \varphi_j(w, g) - \varphi_j(w, g_{-i}).
\end{equation}
\end{axiom-bc}

Balanced contributions requires that the gain player~$i$ obtains from player~$j$'s network presence equals the gain player~$j$ obtains from player~$i$'s presence.
The threatened action is \emph{complete withdrawal}: $g_{-j}$ removes \emph{all} links of player~$j$, not just the single link $\{i,j\}$. In contrast, the next axiom only considers the removal of one link.

\begin{axiom-f}\label{def:fairness}
For every network~$g$ and every link $\{i,j\} \in g$:
\[
	\varphi_i(w, g) - \varphi_i(w, g \setminus \{i,j\}) \;=\; \varphi_j(w, g) - \varphi_j(w, g \setminus \{i,j\}).
\]
\end{axiom-f}

Whereas \BC~and \Fair~apply to players who are directly connected by a link, the following property also applies to players who are indirectly connected.

\begin{axiom-bcplus}\label{def:bc-plus}
For every network~$g$ and players in the same component $\{i,j\} \in C \in N/g$:
\begin{equation}
	\varphi_i(w, g) - \varphi_i(w, g_{-j}) \;=\; \varphi_j(w, g) - \varphi_j(w, g_{-i}).
\end{equation}
\end{axiom-bcplus}

Clearly, \BCplus implies \BC. Yet, the difference appears negligible in the absence of externalities. Specifically, the Myerson value for TU games enriched by a network is characterized by \CE and \BC.\footnote{A \emph{TU game} $\bar{v}$ assigns a worth $\bar{v}(S) \in \mathbb{R}$ to each coalition~$S \subseteq N$, with $\bar{v}(\varnothing) = 0$. A TU game with a network is a pair $(\bar{v}, g)$. The \emph{Myerson value} \citep{myerson1977a} is $\My_i(\bar{v}, g) = \Sh_i(\bar{v}^g)$, where $\bar{v}^g(S) = \sum_{C \in S/g} \bar{v}(C)$ is the graph-restricted TU game and $\Sh_i(\bar{v}) = \sum_{S \subseteq N \setminus \{i\}} (|S|!\,(n-|S|-1)!)/n!\,[\bar{v}(S \cup \{i\}) - \bar{v}(S)]$ is the Shapley value. On TU games with networks, \CE reads $\sum_{i \in C} \varphi_i(\bar{v}, g) = \bar{v}(C)$ for every component~$C \in N/g$, and \BC reads $\varphi_i(\bar{v}, g) - \varphi_i(\bar{v}, g_{-j}) = \varphi_j(\bar{v}, g) - \varphi_j(\bar{v}, g_{-i})$ for every link $\{i,j\} \in g$.} Moreover, the Myerson value satisfies the stronger \BCplus, which could therefore be used to characterize the Myerson value together with \CE. \cite{slikker2007} expands this insight to network worth functions that exhibit no externalities. As will be shown in section~\ref{subsec:edge-vs-pair-wise}, the difference is crucial for networks with externalities.

Moreover, combining either \BC or \Fair together with \CE is characteristic of the same allocation rule in the absence of externalities (see \citet{myerson1977a,myerson1980} for TU games with networks; and \citet{jackson1996} and \citet{slikker2007} for network worth function without externalities).\footnote{A network worth function~$w$ has \emph{no externalities} if $w(C, g) = w(C, g|_C)$ for every component~$C$ and network~$g$: the worth of~$C$ depends only on the internal structure $g|_C$, not on links among players outside~$C$. Every TU game~$\bar v$ with a network induces such a worth function via $w(C, g) := \bar v(C)$; yet, worth functions are generally richer than TU games because worth functions may give different numbers depending on which links inside~$C$ are present.} If we incorporate externalities, the axioms lead to different allocation rules. Specifically, the allocation rule characterized in the next Theorem by \citet{navarro2007}, and the allocation rule of our main result the next section are different.

\begin{theorem}[Navarro, 2007]\label{thm:navarro}
	There exists a unique allocation rule for worth functions, henceforth FCE rule, satisfying \CE and \Fair.
\end{theorem}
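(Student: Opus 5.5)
The plan is to prove uniqueness and existence together, by induction on the number of links $|g|$. The base case is the empty network. Every component is then a singleton $\{i\}$, so \CE forces $\varphi_i(w,\varnothing)=w(\{i\},\varnothing)$.

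For the inductive step, fix $g$ and assume $\varphi(w,h)$ has been uniquely determined for every network $h$ with $|h|<|g|$. Fairness on a link $\{i,j\}\in g$ can then be written as
\[
\varphi_i(w,g)-\varphi_j(w,g) \;=\; \varphi_i(w,g\setminus\{i,j\})-\varphi_j(w,g\setminus\{i,j\}),
\]
and the right-hand side is already known. Within a component $C\in N/g$, choose a spanning tree. Its $|C|-1$ edge equations, together with the single \CE equation for $C$, form a triangular linear system: fixing one player's payoff determines all the others along the tree, and the \CE equation then fixes the remaining free constant. This system has exactly one solution, which gives uniqueness. We define $\varphi(w,g)$ to be that solution.

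The main obstacle is existence. We must show that the solution also satisfies \Fair on the non-tree links of $C$, and that the result does not depend on the chosen tree. I would handle this with a potential function rather than by checking cycles one at a time. Define recursively a potential $P(w,g)$ by
\[
P(w,g) \;=\; \sum_{C\in N/g}\frac{1}{|g|_C|}\Bigl(w(C,g)+\sum_{l\in g|_C}P(w,g\setminus l)\Bigr)-\dots
\]
The aim is to tune this definition so that $\varphi_i(w,g)$ equals a sum over the links incident to $i$ of ``link marginal contributions'' $P(w,g)-P(w,g\setminus l)$. An alternative is to show directly that the differences $d_{ij}(g):=\varphi_i(w,g\setminus\{i,j\})-\varphi_j(w,g\setminus\{i,j\})$ sum to zero around every cycle of $g$. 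Summing around a cycle and applying the induction hypothesis (fairness in each $g\setminus\{i,j\}$) turns each term into a telescoping expression over networks with two links removed. By the symmetry $g\setminus\{a,b\}\setminus\{c,d\}=g\setminus\{c,d\}\setminus\{a,b\}$, these terms cancel pairwise.

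The first route I would try is the cleaner explicit construction, with the value given by the recursive link-based formula
\[
\varphi_i(w,g)=\frac{w(C_i,g)}{|C_i|}+\frac{1}{|C_i|}\sum_{l\in g|_{C_i}}\Bigl(\dots\Bigr).
\]
Since this is laborious, I would instead rely on the cycle argument. Its key step is to verify that, for a cycle $i_0\to\cdots\to i_{\ell-1}\to i_0$, the quantity $\sum_k d_{i_k i_{k+1}}(g)$ vanishes. To do this, I would express each $\varphi$ in $g\setminus\{i_k,i_{k+1}\}$ along the remaining path of the cycle, which is still present in that network, using fairness there. The subtlety is that removing a cycle link can disconnect nothing, since the remaining path keeps the component connected, so the induction hypothesis applies cleanly. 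Once cycle-consistency holds, the tree equations imply fairness on every link, and \CE holds by construction, which completes the induction.
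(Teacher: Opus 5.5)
\textbf{The gap is at the step that is supposed to prove the cycle condition.} Your uniqueness argument is fine. You also correctly reduce existence to a cycle condition: with fairness built in on tree links, fairness on a non-tree link is equivalent to $\sum_k d_{i_k i_{k+1}}(g)=0$ around its fundamental cycle.

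Your route also differs from the paper's. The paper does not reprove this theorem; it takes existence from Navarro's explicit formula $\fce(w,g)=\Phi^{\preccurlyeq}(v_{w^g})$. A working cycle argument would therefore be a formula-free route that mirrors the paper's own proof for BCE.

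Now the failing step. Write $e_m=\{i_m,i_{m+1}\}$ and $Z=\{e_0,\dots,e_{\ell-1}\}$, and for $S\subseteq Z$ let $\delta_m(S)=\varphi_{i_m}(w,g\setminus S)-\varphi_{i_{m+1}}(w,g\setminus S)$. Applying fairness in $g\setminus e_k$ along the remaining path gives $d_{i_k i_{k+1}}(g)=-\sum_{m\neq k}\delta_m(\{e_k,e_m\})$. So the network $g\setminus\{e_k,e_m\}$ contributes $-\delta_m(\{e_k,e_m\})-\delta_k(\{e_k,e_m\})$ to the cycle sum. These are payoff differences across two \emph{different} links in the \emph{same} network. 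The identity $g\setminus\{a,b\}\setminus\{c,d\}=g\setminus\{c,d\}\setminus\{a,b\}$ only says the network is shared; it does not make the two terms opposite.

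For the triangle on $\{1,2,3\}$, what remains after your step is
\[
[\varphi_1-\varphi_2](w,\{\{1,2\}\})+[\varphi_2-\varphi_3](w,\{\{2,3\}\})+[\varphi_3-\varphi_1](w,\{\{1,3\}\}).
\]
This is not identically zero. It vanishes only after a further use of fairness in the one-link networks, which moves everything to $\varnothing$, where it telescopes.

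\textbf{How to repair it.} Iterate down through all subsets of cycle links; this is the fairness analogue of the paper's cycle-sum identity (Lemma~\ref{lem:cycle-main}). Let $X_j=\sum_{S\subseteq Z,\,|S|=j}\sum_{e_m\in S}\delta_m(S)$.
\begin{enumerate}
\item For every $S$, $\sum_{m=0}^{\ell-1}\delta_m(S)=0$ purely by telescoping around the cycle; no axiom is used.
\item For $S\neq\varnothing$ and $e_m\notin S$, fairness at $g\setminus S$ gives $\delta_m(S)=\delta_m(S\cup\{e_m\})$. This is covered by the induction hypothesis because $g\setminus S$ has fewer links than $g$.
\item Combining the two, $X_j=-\sum_{|S|=j}\sum_{e_m\notin S}\delta_m(S\cup\{e_m\})=-X_{j+1}$ for $1\le j\le \ell-1$.
\item Finally $X_\ell=\sum_m\delta_m(Z)=0$, so $\sum_k d_{i_k i_{k+1}}(g)=X_1=(-1)^{\ell-1}X_\ell=0$.
\end{enumerate}
With this inclusion--exclusion over subsets of cycle links in place of pairwise cancellation, your proof goes through.

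Two smaller remarks. Whether removing cycle links disconnects the component is irrelevant: the induction hypothesis covers every network with fewer links, and the telescoping step is purely algebraic. Also, the potential-function and explicit-formula sketches end in ellipses, so they contribute nothing to the argument.
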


The key contribution of this theorem is the existence part. Whereas uniqueness is a straightforward adoption of known induction argument, establishing the existence of a component efficient allocation rule that satisfies the fairness property is not as easy, even against the background that \citet{feldman1996} establishes a related result for the similar yet different framework of partition function form games with networks (see Section~\ref{sec:pff}).

%% ============================================================
\section{Main Result}\label{sec:main}
%% ============================================================

We prove that \CE and \BC characterize a unique allocation rule and construct it explicitly. Uniqueness will be fairly straightforward: within each component~$C$ of~$g$, \CE provides one equation and \BC on the $|C|-1$ edges of any spanning tree provides $|C|-1$ further equations, giving $|C|$ independent equations for $|C|$ unknowns. \emph{Existence} is the actual challenge.
A spanning tree selects only $|C|-1$ of the \BC equations---enough to pin down the allocation rule---but \BC demands that the remaining equations (one per non-tree link) hold as well.

\subsection{Construction of the BCE rule}

For notational simplicity, we define the BCE rule by means of specific trees. A consequence of the uniqueness result will be that the actual choice of the tree to construct the BCE rule does not matter.

For a connected component~$C$ of~$g$, the \emph{minimal-index BFS tree} is the breadth-first search tree rooted at $r = \min(C)$, with neighbors visited in increasing index order.
Explicitly: initialize $S = \{r\}$ and process players in breadth-first order from~$r$.
When player~$i$ is processed, for each neighbor~$j$ of~$i$ in~$C$ with $j \notin S$, taken in increasing order, set $p(j) = i$ (the \emph{parent} of~$j$) and add~$j$ to~$S$.
We write $p(j)$ for the parent of a non-root player~$j$ in this tree.

The \emph{BCE rule} is defined by induction on the number of links in~$g$. \emph{Base case} ($g = \emptyset$):\; $\bce_i(w, \emptyset) = w\bigl(\{i\},\, \emptyset\bigr)$.

\noindent\emph{Induction hypothesis}: Assume the rule has been defined for all graphs with fewer than~$m$ links, for some $m \geq 1$.

\noindent\emph{Induction step}: Let~$g$ have~$m$ links.
For each component~$C$ of~$g$ with $|C| = 1$, set $\bce_i(w,g) = w(\{i\},g)$.
For each component~$C$ of~$g$ with $|C| \geq 2$, fix the minimal-index BFS tree rooted at $r = \min(C)$, define $\gamma_r = 0$, and for each non-root $j \in C$ in BFS order let
\begin{equation}\label{eq:bc-tree}
	\gamma_j \;=\; \gamma_{p(j)} \;-\; \bigl[\bce_{p(j)}(w,\, g_{-j}) \;-\; \bce_j(w,\, g_{-p(j)})\bigr].
\end{equation}
Here $\bce_{p(j)}(w, g_{-j})$ and $\bce_j(w, g_{-p(j)})$ are well-defined by the induction hypothesis (each has less than $m$ links, since $g_{-j}$ removes at least the link $\{j, p(j)\}$).
Then for each $i \in C$:
\begin{equation}\label{eq:formula-full}
	\bce_i(w, g) \;=\; \frac{1}{|C|}\biggl[w(C,\, g) \;-\; \sum_{k \in C} \gamma_k\biggr] \;+\; \gamma_i.
\end{equation}

Note that the offset $\gamma_j - \gamma_{p(j)}$ encodes the asymmetry in how~$j$ and $p(j)$ contribute to each other's payoffs in the subnetworks $g_{-p(j)}$ and $g_{-j}$.
Formula~\eqref{eq:formula-full} then distributes the residual $w(C,g) - \sum_k \gamma_k$ equally among all members of~$C$ and adds the individual offsets~$\gamma_i$. This ensures that the following axioms hold.

\begin{lemma}\label{lem:ce-tree-bc}
	The $\bce$ rule satisfies \CE and (\ref{eq:bc}) on every edge of the minimal-index BFS tree.
\end{lemma}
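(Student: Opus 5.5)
The plan is to verify both properties by direct computation from \eqref{eq:formula-full}, treating each component~$C$ of~$g$ separately. Two preliminary points come first. The rule is well defined at every stage of the induction on the number of links, since all quantities entering \eqref{eq:bc-tree} refer to networks with fewer links. Moreover, the BFS order processes $p(j)$ before $j$, so every $\gamma_j$ is determined.

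For \CE, isolated players $C=\{i\}$ receive $w(\{i\},g)$ by definition, so \eqref{eq:ce} holds trivially. For $|C|\ge 2$, I will sum \eqref{eq:formula-full} over $i\in C$. The equal-share term contributes $|C|\cdot\frac{1}{|C|}\bigl[w(C,g)-\sum_{k\in C}\gamma_k\bigr]$, and the individual offsets contribute $\sum_{i\in C}\gamma_i$. These cancel to leave exactly $w(C,g)$. The base case $g=\emptyset$ is the same singleton computation.

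For \BC on a tree edge $\{p(j),j\}$, I will subtract the two instances of \eqref{eq:formula-full}. The common equal-share term cancels, which gives
\[
\bce_j(w,g)-\bce_{p(j)}(w,g)=\gamma_j-\gamma_{p(j)}=-\bigl[\bce_{p(j)}(w,g_{-j})-\bce_j(w,g_{-p(j)})\bigr]
\]
by \eqref{eq:bc-tree}. Rearranging yields $\bce_{p(j)}(w,g)-\bce_{p(j)}(w,g_{-j})=\bce_j(w,g)-\bce_j(w,g_{-p(j)})$, which is \eqref{eq:bc} for $i=p(j)$. Since \eqref{eq:bc} is symmetric in $i$ and $j$, this covers the unordered edge.

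I do not expect a real obstacle. The only care needed is bookkeeping: the BFS tree is a spanning tree of $C$, so every tree edge has the form $\{p(j),j\}$ with both endpoints in the same component, and the values $\bce(w,g_{-j})$ and $\bce(w,g_{-p(j)})$ used here are exactly those supplied by the induction hypothesis. The genuinely hard part, \BC on non-tree edges, is not claimed by this lemma and is left to the cycle-sum identity.
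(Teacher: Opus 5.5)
Your proposal is correct and is essentially the paper's proof. Summing \eqref{eq:formula-full} over $C$ gives \CE, and subtracting the two instances of \eqref{eq:formula-full} along a tree edge $\{p(j),j\}$ and then applying \eqref{eq:bc-tree} gives \eqref{eq:bc}; your explicit treatment of singleton components and the base case covers what the paper leaves implicit.
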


\begin{proof}
	\emph{\textbf{CE}.}\;
	Summing~\eqref{eq:formula-full} over $i \in C$:
	\[
		\sum_{i \in C} \bce_i(w, g)
		= |C| \cdot \frac{1}{|C|}\Bigl[w(C, g) - \sum_{k \in C} \gamma_k\Bigr] + \sum_{k \in C} \gamma_k
		= w(C, g).
	\]

	\emph{(\ref{eq:bc}) on tree edges.}\;
	For any tree edge $\{j, p(j)\}$, equation~\eqref{eq:formula-full} gives
	\[
		\bce_j(w,g) - \bce_{p(j)}(w,g) = \gamma_j - \gamma_{p(j)},
	\]
	which by~\eqref{eq:bc-tree} equals $\bce_j(w,g_{-p(j)}) - \bce_{p(j)}(w,g_{-j})$.
\end{proof}

Note that the definition of the BCE rule directly refers to (minimal) indices, and that the choice of the spanning tree is not symmetric. Hence, it is not obvious if the BCE rule satisfies the following axiom.

\begin{axiom-s}\label{def:symmetry}
For every permutation $\pi$ of~$N$:\footnote{The permutation of a game is given by $(\pi w)(S, g') = w(\pi^{-1}S, \pi^{-1}g')$, where $\pi^{-1}g' = \{\{\pi^{-1}(k),\pi^{-1}(l)\} : \{k,l\} \in g'\}$ and $\pi g = \bigl\{\{\pi(i),\pi(j)\} : \{i,j\} \in g\bigr\}$.}
$\varphi_{\pi(i)}(\pi w,\, \pi g) \;=\; \varphi_i(w, g).$
\end{axiom-s}

\subsection{The BCE rule is characterized by \BC and \CE}

We are now ready to state our main result. It expands Lemma~\ref{lem:ce-tree-bc} by establishing (\ref{eq:bc}) to any link, including links not used for the definition of the BCE rule. Moreover, it establishes uniqueness of the BCE rule. Since the characteristic axioms \BC and \CE do not distinguish based on player names, it transpires that the BCE rule is symmetric.
\begin{theorem}\label{thm:main}
	The BCE rule is the unique allocation rule satisfying \CE and \BC.

	Moreover, the definition of the BCE rule in \eqref{eq:bc-tree}--\eqref{eq:formula-full} yields the same allocation rule for any spanning forests used in place of the minimal-index BFS forest and the BCE rule satisfies \Sym.
\end{theorem}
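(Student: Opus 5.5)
The plan is to prove uniqueness first, then existence, and then read off tree-independence and \Sym as corollaries.

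\emph{Uniqueness} goes by induction on the number of links. Suppose $\varphi,\psi$ both satisfy \CE and \BC and agree on all networks with fewer than $m$ links. Take $g$ with $m$ links and a component $C$ of $g$ with $|C|\ge 2$. For every link $\{i,j\}\in g|_C$, both $g_{-j}$ and $g_{-i}$ have fewer links than $g$. Hence \eqref{eq:bc} forces $\varphi_i(w,g)-\varphi_j(w,g)=\psi_i(w,g)-\psi_j(w,g)$. Along any spanning tree of $C$ these differences propagate to all of $C$, so $\varphi-\psi$ is constant on $C$. \CE then makes the constant zero. Singleton components are fixed by \CE directly.

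\emph{Existence.} By Lemma~\ref{lem:ce-tree-bc} it remains to show \eqref{eq:bc} on non-tree links. I will do this by a second induction on the number of links, with hypothesis: $\bce$ satisfies \eqref{eq:bc} on every link of every network with fewer than $m$ links. For a link $\{i,j\}$ define the residual
\[
R_g(i,j)=\bigl[\bce_i(w,g)-\bce_j(w,g)\bigr]-\bigl[\bce_i(w,g_{-j})-\bce_j(w,g_{-i})\bigr].
\]
Fix a non-tree link $\{a,b\}$ in $g$. Let $a=i_0,i_1,\dots,i_k=b$ be the tree path joining its endpoints, so that together with $\{a,b\}$ it forms a cycle $i_0\to\cdots\to i_k\to i_0$. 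The first brackets of the residuals telescope to zero around the cycle, and the tree residuals vanish. Therefore
\[
R_g(a,b)=-\sum_{t} \bigl[\bce_{i_t}(w,g_{-i_{t+1}})-\bce_{i_{t+1}}(w,g_{-i_t})\bigr],
\]
with indices taken cyclically. So the task reduces to a \emph{cycle-sum identity}: the cyclic sum on the right vanishes.

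I intend to prove this by regrouping the sum by the removed vertex. Each $i_t$ contributes $\bce_{i_{t-1}}(w,g_{-i_t})-\bce_{i_{t+1}}(w,g_{-i_t})$. In $g_{-i_t}$, the players $i_{t+1}$ and $i_{t-1}$ are still joined by the remaining arc of the cycle. Since $g_{-i_t}$ is a proper subnetwork, the induction hypothesis gives \eqref{eq:bc} on every link of that arc. Telescoping along the arc expresses the contribution of $i_t$ as a sum of terms of the form $\bce_u(w,g_{-i_t,-v})-\bce_v(w,g_{-i_t,-u})$ over consecutive pairs $u,v$ on the arc. The resulting double sum, indexed by a removed vertex $i_t$ and an arc edge $\{u,v\}$, should cancel in pairs via the identity $g_{-i,-j}=g_{-j,-i}$.

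This cancellation is the step I expect to be the main obstacle. The terms need not match directly, because $i_t$ and $v$ are generally not adjacent. If direct pairing fails, I would first try applying the induction hypothesis once more in the doubly reduced networks $g_{-i_t,-v}$, whose surviving cycle arcs are shorter, and organize the whole argument as an induction on the cycle length. Failing that, I would restrict to chordless cycles, which suffice because any cycle decomposes into chordless ones when residuals are additive over cycle sums.

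\emph{Tree-independence and \Sym.} With \CE and \BC established, tree-independence follows at once. The recursion \eqref{eq:bc-tree}--\eqref{eq:formula-full}, run with an arbitrary spanning forest, again satisfies \CE and \BC on its forest links. The uniqueness argument above used only a spanning tree in each component, so this recursion must return the same values on every network. For \Sym, define $\varphi_i(w,g):=\bce_{\pi(i)}(\pi w,\pi g)$. Because \CE and \BC are invariant under relabeling, $\varphi$ satisfies both axioms, and uniqueness gives $\varphi=\bce$.
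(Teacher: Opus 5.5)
Your uniqueness argument, your reduction of \textbf{BC} on a non-tree link to the vanishing of the cyclic sum $\sum_t[\bce_{i_t}(w,g_{-i_{t+1}})-\bce_{i_{t+1}}(w,g_{-i_t})]$, and your derivations of tree-independence and symmetry are sound and match the paper. The gap is that you never prove this cycle-sum identity. It is the entire difficulty of existence, and the route you sketch does not deliver it.

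After one use of \textbf{BC} in each $g_{-i_t}$, the symbols $\bce_{i_c}(w,g_{-\{i_{c-1},i_{c+1}\}})$ do cancel. What remains is
\[
\sum_{c}\;\sum_{x\in Z\setminus\{i_{c-1},i_c,i_{c+1}\}}\Bigl[\bce_{i_c}(w,g_{-\{i_{c-1},x\}})-\bce_{i_c}(w,g_{-\{i_{c+1},x\}})\Bigr],
\]
which is empty only when $\ell=3$. For $\ell=4$ the symbol $\bce_{i_0}(w,g_{-\{i_2,i_3\}})$ occurs exactly once, so no pairing via $g_{-i,-j}=g_{-j,-i}$ can remove it.

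Your second fallback, using \textbf{BC} again in the doubly reduced networks, does close the case $\ell=4$. There the leftover regroups into $\sum_a[\bce_{i_{a+2}}(w,g_{-\{i_a,i_{a+1}\}})-\bce_{i_{a+3}}(w,g_{-\{i_a,i_{a+1}\}})]$. One more application of \textbf{BC} on the link $\{i_{a+2},i_{a+3}\}$ in $g_{-\{i_a,i_{a+1}\}}$ then makes it telescope. For general $\ell$, however, you must continue through all removal levels $|D|=1,\dots,\ell-2$ and keep track of the signs. The pieces of $Z$ surviving in $g_{-D}$ are paths rather than shorter cycles, so it is unclear what statement an induction on cycle length would carry. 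That bookkeeping is exactly what is left undone.

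The chordless-cycle reduction is valid, since the cyclic sum is additive because $\bce_u(w,g_{-v})-\bce_v(w,g_{-u})$ is antisymmetric in $(u,v)$. But it does not help, because nothing in the computation uses the absence of chords.

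The missing idea is to handle all levels at once through a purely formal identity that holds for every allocation rule. Write $R(i,j;\,h)$ for your residual computed at the network $h$. The paper proves
\[
\sum_{s=0}^{\ell-1}R(i_s,i_{s+1};\,g)\;=\;\sum_{a=0}^{\ell-1}\;\sum_{\emptyset\neq D\subseteq Z\setminus\{i_a,i_{a+1}\}}(-1)^{|D|+1}\,R(i_a,i_{a+1};\,g_{-D}).
\]
It is checked by comparing the coefficients of each symbol $\varphi_{i_c}(w,g_{-T})$ with $\emptyset\neq T\subseteq Z\setminus\{i_c\}$:
\begin{enumerate}[(i)]
\item the edge $\{i_c,i_{c+1}\}$ contributes $(-1)^{|T|+1}$ for every such $T$ except $T=\{i_{c+1}\}$;
\item the edge $\{i_{c-1},i_c\}$ contributes $(-1)^{|T|}$ for every such $T$ except $T=\{i_{c-1}\}$.
\end{enumerate}
So only $\varphi_{i_c}(w,g_{-i_{c-1}})-\varphi_{i_c}(w,g_{-i_{c+1}})$ survives. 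This is exactly the left-hand side once its $\varphi_{i_c}(w,g)$ terms telescope.

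Each residual on the right is \textbf{BC} at $g_{-D}$ for the link $\{i_a,i_{a+1}\}$, which is still present because $i_a,i_{a+1}\notin D$. Moreover, $g_{-D}$ has fewer links than $g$ because $D$ contains a cycle vertex. A single appeal to the induction hypothesis therefore makes the right-hand side vanish. Inserting this lemma into your reduction completes the existence proof, and the rest of your argument can stay as it is.
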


We prepare the proof of Theorem~\ref{thm:main} with a lemma that relates balanced contributions on non-tree edges to balanced contributions in smaller networks. This will later allow for an induction argument establishing the main result.

\subsection{Cycle-sum identity}

For any link $\{i,j\} \in g$ and any $D \subseteq N \setminus \{i,j\}$, the \emph{BC residual} at network~$g_{-D}$ is
\begin{equation}\label{eq:bc-residual}
	\begin{split}
		R^{\varphi}_{w}(i,j;\, g_{-D}) \;:=\; & \bigl[\varphi_i(w,\, g_{-D}) - \varphi_i(w,\, g_{-D \cup \{j\}})\bigr]      \\
		                                      & -\; \bigl[\varphi_j(w,\, g_{-D}) - \varphi_j(w,\, g_{-D \cup \{i\}})\bigr].
	\end{split}
\end{equation}
Note that $R^{\varphi}_{w}(i,j;\, g) = 0$ for all links~$\{i,j\} \in g$ is just \BC.

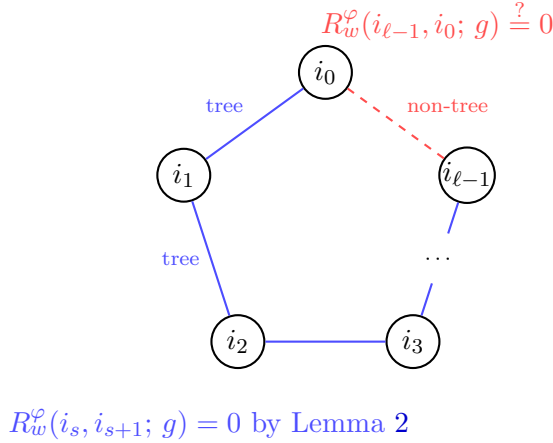
\begin{figure}[ht]
	\centering
	\begin{tikzpicture}[scale=1.1]
		\node[player] (0) at (90:1.8) {$i_0$};
		\node[player] (1) at (162:1.8) {$i_1$};
		\node[player] (2) at (234:1.8) {$i_2$};
		\node[player] (3) at (306:1.8) {$i_3$};
		\node[player] (4) at (18:1.8) {$i_{\ell-1}$};
		\draw[thick, blue!70] (0) -- (1) node[midway, above left, font=\scriptsize] {tree};
		\draw[thick, blue!70] (1) -- (2) node[midway, left, font=\scriptsize] {tree};
		\draw[thick, blue!70] (2) -- (3);
		\draw[thick, blue!70] (3) -- ($(3)!0.35!(4)$);
		\draw[thick, blue!70] (4) -- ($(4)!0.35!(3)$);
		\path (3) -- (4) node[midway, font=\scriptsize] {$\cdots$};
		\draw[thick, dashed, red!70] (4) -- (0) node[midway, above right, font=\scriptsize] {non-tree};
		\node[below=6pt, font=\small, text=blue!70] at (234:2.4) {$R^{\varphi}_{w}(i_s, i_{s+1};\, g) = 0$ by Lemma~\ref{lem:ce-tree-bc}};
		\node[above=6pt, font=\small, text=red!70] at (54:2.3) {$R^{\varphi}_{w}(i_{\ell-1}, i_0;\, g) \stackrel{?}{=} 0$};
	\end{tikzpicture}
	\caption{A fundamental cycle: tree edges (solid blue) satisfy balanced contributions (\ref{eq:bc}) by Lemma~\ref{lem:ce-tree-bc}; the cycle-sum identity reduces (\ref{eq:bc}) on the non-tree edge (dashed red) to BC-residuals in proper subnetworks.}
	\label{fig:cycle-sum}
\end{figure}

% Let $i_0 \to i_1 \to \cdots \to i_{\ell-1} \to i_0$ be a cycle of length $\ell \geq 3$ in a component of~$g$ (indices mod~$\ell$).
% Using the \BC residual~\eqref{eq:bc-residual}, write $R^{\varphi}_{w}(i_a, i_{a+1};\, g_{-D})$ for each cycle edge $\{i_a, i_{a+1}\}$ and $D \subseteq \{i_0, \ldots, i_{\ell-1}\} \setminus \{i_a, i_{a+1}\}$.

\begin{lemma}\label{lem:cycle-main}
	Fix network worth function $w$ and allocation rule $\varphi$.
	For any cycle $Z = i_0 \to i_1 \to \cdots \to i_{\ell-1} \to i_0$ of length $\ell \geq 3$ in~$g$ (indices mod~$\ell$):
	\begin{equation}\label{eq:cycle-identity}
		\sum_{s=0}^{\ell-1} R^{\varphi}_{w}(i_s, i_{s+1};\, g)
		= \sum_{\{i_a, i_{a+1}\} \in Z} \;\sum_{\substack{D \subseteq Z \setminus \{i_a, i_{a+1}\} \\ |D| \geq 1}} (-1)^{|D|+1}\, R^{\varphi}_{w}(i_a, i_{a+1};\, g_{-D}).
	\end{equation}
\end{lemma}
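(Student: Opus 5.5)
The plan is to treat \eqref{eq:cycle-identity} as a purely algebraic cancellation identity. It should hold for any $\varphi$ and $w$, because every residual $R^{\varphi}_{w}(i_a,i_{a+1};g_{-D})$ expands into four payoff values of the form $\varphi_k(w,g_{-E})$ with $k\in Z$ and $E\subseteq Z$. First I move the left-hand side to the right. Since the left-hand side is exactly the $D=\varnothing$ term with sign $(-1)^{|D|}=+1$, the claim becomes equivalent to
\[
\sum_{a=0}^{\ell-1}\;\sum_{D\subseteq Z\setminus\{i_a,i_{a+1}\}} (-1)^{|D|}\, R^{\varphi}_{w}(i_a,i_{a+1};\,g_{-D}) \;=\; 0 .
\]
Here I use that $g_{-\varnothing}=g$, and that $(-1)^{|D|}=-(-1)^{|D|+1}$ for $D\neq\varnothing$.

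Next, I substitute the definition \eqref{eq:bc-residual}. Write $\Phi_k(E):=\varphi_k(w,g_{-E})$ and $g_{-D\cup\{j\}}=g_{-(D\cup\{j\})}$. The sum then becomes a linear combination of the terms $\Phi_k(E)$ with $k\in Z$ and $E\subseteq Z\setminus\{k\}$. I fix such a pair $(k,E)$ and write $k=i_s$. Only the two cycle edges $\{i_{s},i_{s+1}\}$ and $\{i_{s-1},i_s\}$ contain $k$, and they are distinct because $\ell\ge 3$. So I collect the coefficient of $\Phi_k(E)$ from these two edges only.

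\emph{Edge $\{i_s,i_{s+1}\}$, where $k$ is the first player.}
\begin{itemize}
\item The term $+\Phi_k(D)$ appears with $D=E$. This requires $i_{s+1}\notin E$ and has sign $(-1)^{|E|}$.
\item The term $-\Phi_k(D\cup\{i_{s+1}\})$ appears with $E=D\cup\{i_{s+1}\}$. This requires $i_{s+1}\in E$ and has sign $-(-1)^{|E|-1}=(-1)^{|E|}$.
\end{itemize}
Exactly one of the two cases applies, so this edge contributes $(-1)^{|E|}$ for every $E\subseteq Z\setminus\{k\}$.

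\emph{Edge $\{i_{s-1},i_s\}$, where $k$ is the second player.}
\begin{itemize}
\item The term $-\Phi_k(D)$ applies when $i_{s-1}\notin E$.
\item The term $+\Phi_k(D\cup\{i_{s-1}\})$ applies when $i_{s-1}\in E$.
\end{itemize}
In either case the sign is $-(-1)^{|E|}$.

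The two contributions cancel, so every coefficient vanishes and the identity follows.

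The only step needing care is the bookkeeping: each admissible $E$ must be hit exactly once per edge, and there must be no double counting. This works because $D$ ranges over all subsets of $Z\setminus\{i_a,i_{a+1}\}$, so the map $D\mapsto E$ (either $E=D$ or $E=D\cup\{\text{partner}\}$) is a bijection onto the subsets of $Z\setminus\{k\}$, split according to whether the partner lies in $E$. I would also note that the left-hand side of \eqref{eq:cycle-identity} involves only links of $Z$, which lie in $g$. On the right-hand side, $R^{\varphi}_{w}(i_a,i_{a+1};g_{-D})$ is used as a formal expression via \eqref{eq:bc-residual}. This is exactly how the definition is set up, since it only requires $D\subseteq N\setminus\{i_a,i_{a+1}\}$.
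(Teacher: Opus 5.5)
Your proof is correct and follows the same route as the paper: you expand every residual into the formal symbols $\varphi_{i_c}(w,g_{-T})$ and, for each cycle player, collect the coefficients coming from its two incident cycle edges. The only difference is cosmetic. You fold the left-hand side in as the $D=\varnothing$ term, so each edge contributes a uniform $\pm(-1)^{|E|}$ over all $E\subseteq Z\setminus\{k\}$. The paper instead keeps $|D|\geq 1$, treats the singletons $T=\{i_{c\pm1}\}$ separately, and then matches the leftover $\varphi_{i_c}(w,g_{-i_{c-1}})-\varphi_{i_c}(w,g_{-i_{c+1}})$ against the left-hand side.
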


\medskip

Since $w$ and $\varphi$ are fixed throughout this section, we omit them in $R(i,j;\, g_{-D})$ when no confusion can arise. We first illustrate the identity for three players before giving the proof.

\begin{example}[Three-player cycle]\label{ex:triangle}
	Consider players $1, 2, 3$ forming the triangle $1 \to 2 \to 3 \to 1$ (so that $i_0 = 1$, $i_1 = 2$, $i_2 = 3$, $\ell = 3$).
	A spanning tree consists of edges $\{1,2\}$ and $\{2,3\}$; the non-tree edge is $\{3,1\}$.

	The Left-hand side has three terms ($s = 0,1,2$, indices mod~$3$):
	\begin{align*}
		LHS & = \bigl[\varphi_2(w, g_{-1}) - \varphi_3(w, g_{-1})\bigr]        \\
		    & \quad + \bigl[\varphi_3(w, g_{-2}) - \varphi_1(w, g_{-2})\bigr]  \\
		    & \quad + \bigl[\varphi_1(w, g_{-3}) - \varphi_2(w, g_{-3})\bigr].
	\end{align*}

	The right-hand side reduces to three terms since the only nonempty $D$s are singletons:
	\[
		RHS = R(1,2;\, g_{-3}) \;+\; R(2,3;\, g_{-1}) \;+\; R(3,1;\, g_{-2}).
	\]
	Expanding via~\eqref{eq:bc-residual} and collecting:
	\begin{align*}
		R(1,2;\, g_{-3}) & = \varphi_1(w, g_{-3}) - \varphi_1(w, g_{-\{2,3\}}) - \varphi_2(w, g_{-3}) + \varphi_2(w, g_{-\{1,3\}}), \\
		R(2,3;\, g_{-1}) & = \varphi_2(w, g_{-1}) - \varphi_2(w, g_{-\{1,3\}}) - \varphi_3(w, g_{-1}) + \varphi_3(w, g_{-\{1,2\}}), \\
		R(3,1;\, g_{-2}) & = \varphi_3(w, g_{-2}) - \varphi_3(w, g_{-\{1,2\}}) - \varphi_1(w, g_{-2}) + \varphi_1(w, g_{-\{2,3\}}).
	\end{align*}
	The double-isolation terms (those with two players removed) cancel pairwise, leaving the LHS.
\end{example}

\begin{proof}[Proof of Lemma~\ref{lem:cycle-main}]
	Both sides are linear combinations of formal symbols $\varphi_{i_c}(w, g_{-T})$ with $T \subseteq \{i_0, \ldots, i_{\ell-1}\} \setminus \{i_c\}$.
	We show the coefficients match by comparing contributions term by term.

	\medskip\noindent\textit{Left-hand side.}\;
	The LHS involves only singletons $|T| = 1$:
	coefficient $-1$ at $\varphi_{i_c}(w, g_{-i_{c+1}})$ and $+1$ at $\varphi_{i_c}(w, g_{-i_{c-1}})$ (indices mod~$\ell$).

	\medskip\noindent\textit{Right-hand side.}\;
	Write $\chi_c(T) := \varphi_{i_c}(w, g_{-T})$ for brevity. The RHS becomes
	\[
		\sum_{a=0}^{\ell-1}
		\sum_{\substack{D \subseteq Z\setminus\{i_a,i_{a+1}\} \\ |D|\geq 1}}
		(-1)^{|D|+1}
		\bigl[
			\chi_a(D)
			- \chi_a(D\cup\{i_{a+1}\})
			- \chi_{a+1}(D)
			+ \chi_{a+1}(D\cup\{i_a\})
			\bigr].
	\]

	Fix $c$ and collect all contributions involving $\chi_c(T)$ for a given $T \subseteq Z\setminus\{i_c\}$.
	The symbol $\chi_c(T)$ can appear from the two cycle edges adjacent to~$i_c$: edge $ \{i_c, i_{c+1}\}$ (where $i_c$ is the first endpoint) and edge $\{i_{c-1}, i_c\}$ (where $i_c$ is the second endpoint).

	\smallskip\emph{From $\{i_c, i_{c+1}\}$ (first-endpoint role):}\;
	The term $\chi_c(D)$ contributes $(-1)^{|D|+1}$ when $D = T$ and $i_{c+1} \notin T$.
	The term $-\chi_c(D \cup \{i_{c+1}\})$ contributes $-(-1)^{|D|+1} = (-1)^{|T|+1}$ when $D = T \setminus \{i_{c+1}\}$ and $i_{c+1} \in T$, $|T| \geq 2$.
	Both cases carry sign $(-1)^{|T|+1}$ and cover all $T$ with $|T| \geq 1$, except $T = \{i_{c+1}\}$:
	(a)~if $i_{c+1} \notin T$, then $D = T$ works, requiring $|T| \geq 1$;
	(b)~if $i_{c+1} \in T$, then $D = T \setminus \{i_{c+1}\}$ works, requiring $|D| \geq 1$, i.e., $|T| \geq 2$.
	The only $T$ not covered is $T = \{i_{c+1}\}$, which fails~(a) because $i_{c+1} \in T$ and fails~(b) because $|T| < 2$.
	So the contribution from $\{i_c, i_{c+1}\}$ is
	\[
		\sum_{\substack{T : |T| \geq 1 \\ T \neq \{i_{c+1}\}}} (-1)^{|T|+1} \chi_c(T)
		\;=\; -S_c - \chi_c(\{i_{c+1}\}),
	\]
	where $S_c = \sum_{|T| \geq 1} (-1)^{|T|} \chi_c(T)$.

	\smallskip\emph{From $\{i_{c-1}, i_c\}$ (second-endpoint role):}\;
	An identical argument with reversed sign gives
	$S_c + \chi_c(\{i_{c-1}\})$.

	In total, the $S_c$ terms cancel, leaving $\chi_c(\{i_{c-1}\}) - \chi_c(\{i_{c+1}\}) = \varphi_{i_c}(w, g_{-i_{c-1}}) - \varphi_{i_c}(w, g_{-i_{c+1}})$. Summing over $c$ and reindexing gives
	\[
		\mathrm{RHS}
		= \sum_{c=0}^{\ell-1}\bigl[\varphi_{i_c}(w,g_{-i_{c-1}})-\varphi_{i_c}(w,g_{-i_{c+1}})\bigr]
		= \mathrm{LHS}. \qedhere
	\]
\end{proof}

\subsection{Proof of Theorem~\ref{thm:main}}

\begin{proof}
	We prove existence and uniqueness simultaneously by induction on the number of links~$m$.
	\emph{Induction Basis} ($m = 0$):\;
	\textbf{CE} gives $\varphi_i(w, \emptyset) = w(\{i\}, \emptyset)$ uniquely; \textbf{BC} is vacuous.
	This coincides with $\bce(w, \emptyset)$.
	\emph{Induction hypothesis}:\;
	For all networks with at most~$m$ links, the unique value satisfying \textbf{CE} and \textbf{BC} is~$\bce$.
	\emph{Induction step}:\;
	Let~$g$ have $m+1$ links.

	\smallskip\noindent\textit{Existence.}\;
	By the induction hypothesis, $\bce(w, g')$ is well-defined for every network~$g'$ with at most~$m$ links, so the offsets~$\gamma_k$ in~\eqref{eq:bc-tree} are well-defined.
	Lemma~\ref{lem:ce-tree-bc} gives \textbf{CE} and \textbf{BC} on every BFS tree edge.
	It remains to verify \textbf{BC} on non-tree edges.

	Let $\{i_{\ell-1}, i_0\}$ be a non-tree edge in a component~$C$.
	Together with the BFS tree it forms a fundamental cycle $Z = i_0 \to i_1 \to \cdots \to i_{\ell-1} \to i_0$ of length $\ell \geq 3$, where all edges except $\{i_{\ell-1}, i_0\}$ are tree edges, for which  $R^{\bce}_{w}(i_0, i_1;\, g) = \ldots = R^{\bce}_{w}(i_{\ell-2}, i_{\ell-1};\, g) = 0$.
	We need to show that $R^{\bce}_{w}(i_{\ell-1}, i_0;\, g) = 0$.
	Since the tree-edge residuals vanish, we have $$R^{\bce}_{w}(i_{\ell-1}, i_0;\, g) = \sum_{s=0}^{\ell-1} R^{\bce}_{w}(i_s, i_{s+1};\, g),$$
	which equals the LHS of the cycle-sum identity in Lemma~\ref{lem:cycle-main}. Now consider the corresponding RHS of the cycle-sum identity
	$$\mathrm{RHS} = \sum_{\{i_a, i_{a+1}\} \in Z} \;\sum_{\substack{D \subseteq Z \setminus \{i_a, i_{a+1}\} \\ |D| \geq 1}} (-1)^{|D|+1}\, R^{\bce}_{w}(i_a, i_{a+1};\, g_{-D}),
	$$
	where each residual $R^{\bce}_{w}(i_a, i_{a+1};\, g_{-D})$ with $|D| \geq 1$ refers to network~$g_{-D}$ with at most~$m$ links. By the induction hypothesis, BCE satisfies \textbf{BC} at~$g_{-D}$, which gives $R^{\bce}_{w}(i_a, i_{a+1};\, g_{-D}) = 0$, hence $\mathrm{RHS} = 0$ and $R^{\bce}_{w}(i_{\ell-1}, i_0;\, g) = 0$.

	\smallskip\noindent\textit{Uniqueness.}\;
	Let $\varphi$ satisfy \textbf{CE} and \textbf{BC}, and let~$C$ be a component of~$g$ with $|C| \geq 2$.
	Fix any spanning tree of~$C$.
	For each tree edge $\{j, p(j)\}$, \textbf{BC} gives
	$\varphi_j(w,g) - \varphi_j(w,g_{-p(j)}) = \varphi_{p(j)}(w,g) - \varphi_{p(j)}(w,g_{-j})$.
	Since $g_{-j}$ and $g_{-p(j)}$ have at most~$m$ links, the induction hypothesis gives $\varphi = \bce$ on these subnetworks, hence
	\[
		\varphi_j(w,g) - \varphi_{p(j)}(w,g) \;=\; \bce_j(w,g) - \bce_{p(j)}(w,g)
	\]
	for each tree edge.
	Propagating along the tree, $\varphi_j(w,g) - \varphi_r(w,g) = \bce_j(w,g) - \bce_r(w,g)$ for all $j \in C$.
	Substituting into \textbf{CE} of $\varphi$ and BCE,
	\[
		\sum_{k \in C} \varphi_k(w, g) \;=\; w(C, g) \;=\; \sum_{k \in C} \bce_k(w, g),
	\]
	yields $\varphi_i(w,g) = \bce_i(w,g)$ for all $i \in C$.

	\smallskip\noindent\textit{Forest-independence.}\;
	If $\varphi$ is the value produced by any other spanning forest, then $\varphi$ also satisfies \textbf{CE} and \textbf{BC} (by the same Lemma~\ref{lem:ce-tree-bc} and cycle-sum argument applied to that forest).
	The uniqueness argument above then gives $\varphi = \bce$.

	\smallskip\noindent\textit{Symmetry.}\;
	For any permutation~$\pi$, the rule $\varphi_i(w, g) := \bce_{\pi(i)}(\pi w, \pi g)$ satisfies \textbf{CE} and \textbf{BC}.
	By uniqueness, $\varphi = \bce$.
\end{proof}

%% ============================================================
\section{Properties of the BCE rule}\label{sec:consequences}
%% ============================================================

While linearity of the BCE rule in the network worth function is immediate from its construction, other properties require closer examination. We investigate these next and contrast the BCE rule with the FCE rule.\footnote{Linearity: For every network~$g$, the map $w \mapsto \varphi(w, g)$ is linear: $\varphi(\alpha w + \beta w', g) = \alpha\,\varphi(w, g) + \beta\,\varphi(w', g)$ for all worth functions $w, w'$ and scalars $\alpha, \beta$.}

\subsection{BCE rule fails pair-wise balanced contributions}\label{subsec:edge-vs-pair-wise}

The BCE rule satisfies \BC on every edge but not \BCplus on all connected pairs as demonstrated by the following example.

\begin{example}\label{ex:pair-bc-fails}
	Let $N = \{1,2,3,4\}$ and define $w(C, h) = 1$ if $3 \in C$ and players~$1$ and~$2$ belong to the same component of~$h$, and $w(C,h) = 0$ otherwise.
	Consider $g = \bigl\{\{1,2\},\{1,4\},\{3,4\}\bigr\}$.
	The BCE rule gives $\bce(w, g) = (0,\, 0,\, 1,\, 0)$.
	\begin{figure}[ht]
		\centering
		\begin{tikzpicture}[scale=1.0, every node/.style={font=\small}]
			% --- g ---
			\begin{scope}
				\node[anchor=south, font=\small\bfseries] at (0.75,1.6) {$g$};
				\node[player] (1) at (0,1) {$1$};
				\node[player] (4) at (1.5,1) {$4$};
				\node[player] (2) at (0,0) {$2$};
				\node[player] (3) at (1.5,0) {$3$};
				\draw[thick] (1) -- (2);
				\draw[thick] (1) -- (4);
				\draw[thick] (4) -- (3);
				\node[left, font=\scriptsize] at (-0.4,1) {$0$};
				\node[right, font=\scriptsize] at (1.9,1) {$0$};
				\node[left, font=\scriptsize] at (-0.4,0) {$0$};
				\node[right, font=\scriptsize] at (1.9,0) {$1$};
			\end{scope}
			% --- g_{-3} ---
			\begin{scope}[xshift=4.5cm]
				\node[anchor=south, font=\small\bfseries] at (0.75,1.6) {$g_{-3}$};
				\node[player] (1b) at (0,1) {$1$};
				\node[player] (4b) at (1.5,1) {$4$};
				\node[player] (2b) at (0,0) {$2$};
				\node[player, fill=gray!20] (3b) at (1.5,0) {$3$};
				\draw[thick] (1b) -- (2b);
				\draw[thick] (1b) -- (4b);
				\node[left, font=\scriptsize] at (-0.4,1) {$0$};
				\node[right, font=\scriptsize] at (1.9,1) {$0$};
				\node[left, font=\scriptsize] at (-0.4,0) {$0$};
				\node[right, font=\scriptsize] at (1.9,0) {$1$};
			\end{scope}
			% --- g_{-1} ---
			\begin{scope}[xshift=9cm]
				\node[anchor=south, font=\small\bfseries] at (0.75,1.6) {$g_{-1}$};
				\node[player, fill=gray!20] (1c) at (0,1) {$1$};
				\node[player] (4c) at (1.5,1) {$4$};
				\node[player] (2c) at (0,0) {$2$};
				\node[player] (3c) at (1.5,0) {$3$};
				\draw[thick] (4c) -- (3c);
				\node[left, font=\scriptsize] at (-0.4,1) {$0$};
				\node[right, font=\scriptsize] at (1.9,1) {$0$};
				\node[left, font=\scriptsize] at (-0.4,0) {$0$};
				\node[right, font=\scriptsize] at (1.9,0) {$0$};
			\end{scope}
		\end{tikzpicture}
		\caption{Failure of \BCplus for the non-adjacent pair $\{1,3\}$. Numbers next to players are BCE payoffs. \textbf{Left:} $\bce(w,g) = (0,0,1,0)$. \textbf{Middle:} isolating player~$3$ preserves player~$1$'s payoff at $0$. \textbf{Right:} isolating player~$1$ breaks the partnership, so player~$3$'s payoff drops from $1$ to $0$. The asymmetry $0 - 0 \neq 1 - 0$ violates \BCplus.}
		\label{fig:pair-bc-fails}
	\end{figure}
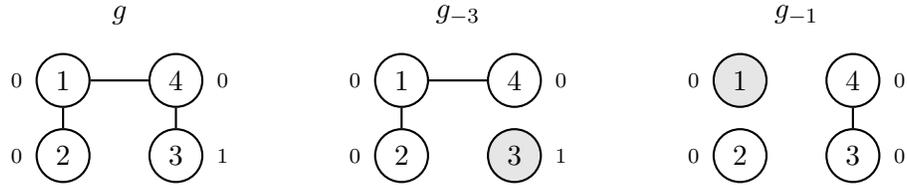
	\noindent
	Figure~\ref{fig:pair-bc-fails} shows the BCE payoffs at~$g$, $g_{-3}$, and $g_{-1}$.
	For the non-adjacent pair $\{1,3\}$:
	\[
		\bce_1(w, g) - \bce_1(w, g_{-3}) = 0 - 0
		\;\neq\;
		1 - 0 = \bce_3(w, g) - \bce_3(w, g_{-1}).
	\]
\end{example}
Together with Theorem~\ref{thm:main}, we get the following impossibility result.
\begin{corollary}\label{cor:pair-bc-incompatible}
	For networks with externalities, \CE is incompatible with \BCplus.
\end{corollary}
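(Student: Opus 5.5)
The plan is to argue by contradiction, using Theorem~\ref{thm:main} together with Example~\ref{ex:pair-bc-fails}. Suppose an allocation rule $\varphi$ satisfies \CE and \BCplus. Since \BCplus implies \BC (adjacent players lie in the same component), $\varphi$ satisfies \CE and \BC. By the uniqueness part of Theorem~\ref{thm:main}, $\varphi = \bce$ on every worth function and every network. It therefore suffices to exhibit a single worth function $w$ and network $g$ at which $\bce$ violates \BCplus. Example~\ref{ex:pair-bc-fails} is designed for this.

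The only substantive step is to verify the BCE payoffs claimed in the example. I will compute them directly from \CE and \BC rather than from the BFS formula, since by Theorem~\ref{thm:main} these axioms determine $\bce$ uniquely.

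\begin{enumerate}[(i)]
\item \emph{Payoffs at $g_{-1}$.} Here $g_{-1}=\{\{3,4\}\}$ and players $1$ and $2$ are in different components, so $w \equiv 0$ on every component. By linearity of the construction, all payoffs are $0$.
\item \emph{Payoffs at $g_{-3}$.} Here $g_{-3}=\{\{1,2\},\{1,4\}\}$ with components $\{1,2,4\}$ and $\{3\}$. Players $1$ and $2$ are linked, so $w(\{3\},g_{-3})=1$, and \CE gives $\bce_3(w,g_{-3})=1$. The component $\{1,2,4\}$ has worth $0$. I would check that its payoffs are all zero by applying \BC on the links $\{1,2\}$ and $\{1,4\}$. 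The required subnetworks $g_{-3,-1}$, $g_{-3,-2}$ and $g_{-3,-4}$ are small and can be evaluated the same way. In each of them, the only nonzero worth sits on the component containing~$3$, so the payoffs of players $1$, $2$, $4$ vanish.
\item \emph{Payoffs at $g$.} $g$ is connected with $w(N,g)=1$, since $3\in N$ and $1,2$ are connected. \CE and \BC on its three tree links, using the subnetworks $g_{-j}$, should then pin down $\bce(w,g)=(0,0,1,0)$. This requires the values at $g_{-1}$, $g_{-2}$, $g_{-3}$ and $g_{-4}$. In $g_{-2}$ and $g_{-4}$, players $1$ and $2$ are not both in one component unless linked, so the relevant worth is $0$; in $g_{-4}$, for example, player $3$ is isolated but $1$ and $2$ remain linked, so $3$ gets $1$.
\end{enumerate}

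The main obstacle is doing this bookkeeping cleanly. A shortcut is to guess that $x=(0,0,1,0)$ is correct throughout and then verify \BC on each link: on $\{1,2\}$, $\{1,4\}$ and $\{3,4\}$ the differences $x_i(g)-x_i(g_{-j})$ must match. On $\{3,4\}$, player~$3$ gains $1-1=0$ relative to $g_{-4}$, while player~$4$ gains $0-0$. Since \CE plus \BC determine the rule, verifying a candidate suffices, provided the candidates on the subnetworks are verified first, recursively.

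Finally, the pair $\{1,3\}$ lies in the single component $N$ of $g$. The BCE values give $\bce_1(w,g)-\bce_1(w,g_{-3})=0$ but $\bce_3(w,g)-\bce_3(w,g_{-1})=1$. So $\bce$, and hence $\varphi$, violates \BCplus at $(w,g)$. This is the desired contradiction, so no allocation rule satisfies both \CE and \BCplus. The worth function $w$ has genuine externalities, because the worth of a component containing~$3$ depends on links outside it. This matches the qualifier ``for networks with externalities'': without externalities, the Myerson value satisfies both axioms.
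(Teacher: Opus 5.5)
Your proof is correct and follows the paper's own argument. Since \textbf{BC\textsuperscript{+}} implies \textbf{BC}, the uniqueness part of Theorem~\ref{thm:main} forces any rule satisfying \textbf{CE} and \textbf{BC\textsuperscript{+}} to equal $\bce$, and Example~\ref{ex:pair-bc-fails} shows that $\bce$ violates \textbf{BC\textsuperscript{+}} for the pair $\{1,3\}$; your recursive check of the payoffs ($(0,0,1,0)$ at $g$, $g_{-3}$, $g_{-4}$, and $0$ at $g_{-1}$, $g_{-2}$) is right. One small repair: in step (i), linearity alone does not give zero payoffs; what you need is that $w$ vanishes at every network $g_{-\{1\}\cup D}$ the recursion visits, since player~$1$ stays isolated there (equivalently, apply \textbf{CE} and \textbf{BC} on the single link $\{3,4\}$ with the empty network below it).
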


\subsection{BCE rule fails fairness}\label{sec:relation-fce}

The BCE rule does not coincide with the FCE rule as demonstrated by the next example.

\begin{example}\label{ex:bce-fails-fairness}
	Let $N = \{1,2,3\}$ and $w(C,h) = 1$ if $3 \in C$ and players~$1$ and~$2$ belong to the same component of~$h$; and $w(C,h) = 0$ otherwise.
	At $g = \{\{1,2\},\{1,3\}\}$, the BCE rule gives $\bce(w, g) = (\tfrac{1}{3},\, \tfrac{1}{3},\, \tfrac{1}{3})$, whereas the FCE rule assigns $\fce(w, g) = (0, 0, 1)$.
\end{example}

Recall that the BCE rule is characterized by \CE and \BC, whereas the FCE rule is characterized by \CE and \Fair. Since the two rules differ in general (Example~\ref{ex:bce-fails-fairness}), the BCE rule does not satisfy \Fair, and the FCE rule does not satisfy \BC. The three axioms are therefore mutually incompatible for networks with externalities.

\begin{corollary}\label{cor:bc-fairness-incompatible}
	For networks with externalities, \CE, \BC, and \Fair are mutually incompatible.
\end{corollary}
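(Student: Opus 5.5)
Corollary~\ref{cor:bc-fairness-incompatible} asserts that no allocation rule satisfies \CE, \BC, and \Fair simultaneously on the domain of network worth functions with externalities. I argue by contradiction, using the two characterization results already available. Suppose $\varphi$ satisfies all three axioms. Since $\varphi$ satisfies \CE and \BC, Theorem~\ref{thm:main} forces $\varphi = \bce$. Since $\varphi$ satisfies \CE and \Fair, Theorem~\ref{thm:navarro} forces $\varphi = \fce$. Hence $\bce = \fce$ as allocation rules, so it suffices to exhibit a single pair $(w,g)$ on which they differ.

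That pair is supplied by Example~\ref{ex:bce-fails-fairness}. Take $N=\{1,2,3\}$, the worth function $w$ defined there, and $g=\{\{1,2\},\{1,3\}\}$; the example states $\bce(w,g)=(\tfrac13,\tfrac13,\tfrac13)\neq(0,0,1)=\fce(w,g)$. For completeness I would verify both numbers directly from the axioms on the small subnetworks of $g$.

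\emph{BCE side.} By \CE, in $g_{-1}=\emptyset$ and $g_{-3}=\{\{1,2\}\}$ player~$3$ is a singleton. In $g_{-1}$, players $1,2$ are not linked, so $\bce_3(w,g_{-1})=0$; in $g_{-3}$, players $1,2$ are linked, so $\bce_3(w,g_{-3})=1$. Players $1,2$ share $0$ in $g_{-3}$, and by \Sym/\BC on the edge $\{1,2\}$ in $g_{-3}$ (with $\emptyset$ below it) each gets $0$. Every other subnetwork payoff is $0$. Writing $x=\bce(w,g)$, the BC equation on $\{1,2\}$ gives $x_1 - 0 = x_2 - 0$, and the BC equation on $\{1,3\}$ gives $x_1 - 0 = x_3 - 0$. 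Together with \CE, $x_1+x_2+x_3=w(N,g)=1$, this yields $(\tfrac13,\tfrac13,\tfrac13)$.

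\emph{FCE side.} Remove the single link $\{1,3\}$ to get $\{\{1,2\}\}$, where \CE gives payoffs $(0,0,1)$. Removing $\{1,2\}$ instead gives $\{\{1,3\}\}$ with worth $0$, so those payoffs are $(0,0,0)$. Fairness on $\{1,2\}$ gives $y_1 - 0 = y_2 - 0$. Fairness on $\{1,3\}$ gives $y_1 - 0 = y_3 - 1$. Combined with $y_1+y_2+y_3=1$, this yields $3y_1+1=1$, hence $y=(0,0,1)$.

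Since $\bce(w,g)\neq\fce(w,g)$, no rule can satisfy all three axioms, which proves the corollary. The only real content lies in these two small computations, and they are routine. The one point requiring care is to check that the worth function really has externalities, so that the claim is not contradicting \citet{jackson1996} and \citet{slikker2007}: $w(\{3\},\cdot)$ depends on whether $1$ and $2$ are linked, a link outside $\{3\}$.
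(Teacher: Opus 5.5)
Your proof is correct and follows the paper's argument. A rule satisfying all three axioms would have to equal both the BCE rule (by Theorem~\ref{thm:main}) and the FCE rule (by Theorem~\ref{thm:navarro}), and these differ at the pair $(w,g)$ of Example~\ref{ex:bce-fails-fairness}; your explicit check of $(\tfrac13,\tfrac13,\tfrac13)$ versus $(0,0,1)$ is accurate and fills in the computation the paper leaves implicit.
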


This is different in the absence of externalities, when the worth of a component depends only on the component's own internal structure---the \emph{no-externalities} setting of \citet{jackson1996}. The BCE rule and the FCE rule then coincide with the allocation rule of Jackson--Wolinsky, JW.

\begin{corollary}\label{cor:jw}
	Suppose $w(C, g) = w(C, g|_C)$ for every component~$C$ and network~$g$ (no externalities between components).
	Then $\bce_i(w, g) = \jw_i(w, g) = \fce_i(w, g)$ for all~$i$ and all~$g$.
	In particular, for a TU game~$\bar{v}$ with $w(C,g) = \bar{v}(C)$, the BCE rule coincides with the Myerson value $\My_i(\bar{v}, g) = \Sh_i(\bar{v}^g)$.
\end{corollary}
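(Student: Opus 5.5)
The plan is to use Theorem~\ref{thm:main}. \BC and \CE pin down a unique rule, so it suffices to show that the Jackson--Wolinsky rule $\jw$ satisfies \CE and \BC whenever $w$ has no externalities. The same will then follow for $\fce$, because $\fce$ is the unique rule satisfying \CE and \Fair (Theorem~\ref{thm:navarro}), and $\jw$ satisfies both. Recall that $\jw_i(w,g)=\Sh_i(v^g_w)$ with $v^g_w(S)=\sum_{C\in S/g} w(C,g|_S)$. The no-externality assumption $w(C,h)=w(C,h|_C)$ lets us rewrite this as $v^g_w(S)=\sum_{C\in S/g} w(C,g|_C)$. In other words, the worth of $S$ depends only on the induced subgraph $g|_S$, which is exactly the setting of \citet{jackson1996} and \citet{slikker2007}.

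First I check \CE. For a component $C\in N/g$, the game $v^g_w$ is additively separable across the components of $g$, since $v^g_w(S)=\sum_{C'\in N/g} v^g_w(S\cap C')$. Each component is therefore a null-player-separable carrier block. By additivity and the carrier property of the Shapley value, $\sum_{i\in C}\Sh_i(v^g_w)=v^g_w(C)=w(C,g|_C)=w(C,g)$.

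Next I check \BC, and in fact \BCplus. The key observation is that $v^{g_{-j}}_w$ coincides with $v^g_w$ on coalitions not containing $j$, and it splits off $j$ as a singleton: $v^{g_{-j}}_w(S)=v^g_w(S\setminus\{j\})+v^g_w(\{j\})$ when $j\in S$. Here I use that $g_{-j}|_{S}=g|_{S\setminus\{j\}}\cup\{\}$, together with the absence of externalities, so that $w(\{j\},\cdot)$ is constant. For $i\ne j$, the Shapley payoff of $i$ in this game therefore equals $\Sh_i$ of $v^g_w$ restricted to $N\setminus\{j\}$. Myerson's (1980) balanced contributions property of the Shapley value, $\Sh_i(v)-\Sh_i(v|_{N\setminus j})=\Sh_j(v)-\Sh_j(v|_{N\setminus i})$, then gives \eqref{eq:bc} for all pairs. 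Theorem~\ref{thm:main} then yields $\bce=\jw$.

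For the TU case, I would set $w(C,g)=\bar v(C)$, which has no externalities. Then $v^g_w=\bar v^g$, so $\jw=\My$.

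The main obstacle is making the restriction step rigorous: showing that $\Sh_i(v^{g_{-j}}_w)=\Sh_i(v^g_w|_{N\setminus j})$. This requires care with $j$'s isolated singleton worth, which is an additive term that affects only $j$ and does not affect $i$'s Shapley payoff.
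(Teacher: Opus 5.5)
Your proposal is correct and shares the paper's skeleton. You show that $\jw$ satisfies \CE, \BC and \Fair on no-externality worth functions, and then invoke the uniqueness parts of Theorems~\ref{thm:main} and~\ref{thm:navarro}. Applying them with $w$ fixed is legitimate, because both uniqueness inductions run over networks with $w$ held fixed.

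The difference lies in how the ingredients are obtained. The paper simply cites Jackson--Wolinsky for \CE and \Fair and Slikker (2007) for \BC. You instead derive \CE from the additive separability of the graph-restricted game across the components of $g$, using the carrier property and additivity of $\Sh$. You derive \BC from the identity $v^{g_{-j}}_w(S)=v^g_w(S\setminus\{j\})+w(\{j\},\varnothing)\,\mathbf{1}[j\in S]$. This identity says that isolating $j$ in the network amounts to deleting $j$ from the game, so Myerson's (1980) balanced contributions property of the Shapley value transfers directly.

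This route buys three things. It makes the $\bce=\jw$ half self-contained. It yields \BCplus at no extra cost. And it shows exactly where the no-externality assumption enters: in $w(C,g|_S)=w(C,g|_C)$, and in $w(\{j\},\cdot)$ being constant.

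The step you flag as the main obstacle is routine. By additivity, split the game into two parts. In $S\mapsto v^g_w(S\setminus\{j\})$, player $j$ is null, so $i$'s Shapley payoff equals its payoff in the reduced game on $N\setminus\{j\}$ (by the random-order formula or Harsanyi dividends). In the additive game $S\mapsto w(\{j\},\varnothing)\,\mathbf{1}[j\in S]$, player $i$ is null.

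The $\jw=\fce$ half, however, still rests on citing Jackson--Wolinsky's fairness result exactly as in the paper, since you do not verify \Fair yourself. Incidentally, your definition $v^g_w(S)=\sum_{C\in S/g}w(C,g|_S)$ is the well-typed form of the paper's footnote formula. That formula writes $w(C,g)$, although $C\in S/g$ need not be a component of~$g$.
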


\begin{proof}
	JW satisfies \textbf{CE}, \textbf{F} \citep{jackson1996}, and \textbf{BC} \citep{slikker2007} for no-externality worth functions.\footnote{$\jw(w, g) = \Sh(\bar{v}^{JW})$, where the Shapley value $\Sh$ is applied to the TU game given by $\bar{v}^{JW}(S) = \sum_{C \in S/g} w(C, g)$.}
	The first claim follows with Theorems~\ref{thm:navarro} and~\ref{thm:main}.
	For the TU case, $\bar{v}^{JW}(S) = \sum_{C \in S/g} \bar{v}(C) = \bar{v}^g(S)$, so $\jw = \My$.
\end{proof}

\subsection{BCE rule on the complete network}\label{sec:complete}

On the complete network~$g^N$, every pair of players is adjacent, so the network imposes no restriction on cooperation, and \CE reduces to: $\sum_{i \in N} \bce_i(w, g^N) = w(N, g^N)$.

For $S \subseteq N$, let $g^S$ denote the network in which $S$ forms a complete subnetwork and each player outside~$S$ is isolated, so $(g^S)_{-j} = g^{S\setminus\{j\}}$ for $j \in S$.
Define the TU game $\bar{v}^{\mathrm{EF}}: 2^N \to \mathbb{R}$ by $\bar{v}^{\mathrm{EF}}(S) = w(S, g^S)$, and write $\bar{v}^{\mathrm{EF}}|_S$ for its restriction to subsets of~$S$.

\begin{proposition}\label{prop:bce-complete}
	For any worth function~$w$,\; $\bce_i(w, g^N) = \Sh_i(\bar{v}^{\mathrm{EF}})$.
\end{proposition}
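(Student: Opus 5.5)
We prove the stronger claim $\bce_i(w, g^S) = \Sh_i(\bar{v}^{\mathrm{EF}}|_S)$ for every $S \subseteq N$ and every $i \in S$, by induction on $|S|$; the proposition is the case $S = N$. In $g^S$ the players outside $S$ are isolated singletons. Their payoffs are fixed by \CE and are irrelevant to the claim, so only the single component $S$ matters. The base case $|S| = 1$ is immediate: $g^S$ has no links, so $\bce_i(w, g^{\{i\}}) = w(\{i\}, g^{\{i\}}) = \bar{v}^{\mathrm{EF}}(\{i\}) = \Sh_i(\bar{v}^{\mathrm{EF}}|_{\{i\}})$.

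For the induction step, let $|S| \geq 2$. The component of $g^S$ containing $S$ is $S$ itself, so \CE gives
\[
\sum_{i \in S} \bce_i(w, g^S) = w(S, g^S) = \bar{v}^{\mathrm{EF}}(S).
\]
Every pair $i,j \in S$ is linked in $g^S$, so Theorem~\ref{thm:main} gives \BC for all such pairs:
\[
\bce_i(w, g^S) - \bce_i(w, g^{S\setminus\{j\}}) = \bce_j(w, g^S) - \bce_j(w, g^{S\setminus\{i\}}).
\]
Here we use $(g^S)_{-j} = g^{S\setminus\{j\}}$ and $(g^S)_{-i} = g^{S\setminus\{i\}}$. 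The induction hypothesis applies to $S\setminus\{j\}$ and $S\setminus\{i\}$, which replaces the payoffs in the smaller networks by $\Sh_i(\bar{v}^{\mathrm{EF}}|_{S\setminus\{j\}})$ and $\Sh_j(\bar{v}^{\mathrm{EF}}|_{S\setminus\{i\}})$.

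Now invoke Myerson's (1980) result: the Shapley value restricted to subgames of $\bar{v}^{\mathrm{EF}}|_S$ satisfies efficiency and balanced contributions for all pairs. Hence $x_i := \Sh_i(\bar{v}^{\mathrm{EF}}|_S)$ satisfies the same system of equations, namely efficiency together with the pairwise differences $x_i - x_j$. In the system for $\bce$, the pairwise differences $\bce_i(w,g^S) - \bce_j(w,g^S)$ are determined by the displayed \BC equation, and its right-hand terms are exactly the corresponding Shapley quantities by the induction hypothesis. The differences along any spanning tree of $S$ (for instance, a star) together with the efficiency equation pin down the vector uniquely. Therefore $\bce_i(w, g^S) = \Sh_i(\bar{v}^{\mathrm{EF}}|_S)$ for all $i \in S$.

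The only delicate point is the bookkeeping that $(g^S)_{-j}$ is again a network of the same form $g^{S\setminus\{j\}}$, so that the induction stays within this family. It holds because removing all links of $j$ from the complete network on $S$ leaves the complete network on $S\setminus\{j\}$, with $j$ isolated. A second check is that the worth $w(S\setminus\{j\}, g^{S\setminus\{j\}})$ entering \CE at the smaller network is exactly $\bar{v}^{\mathrm{EF}}(S\setminus\{j\})$, which holds by definition. Externalities play no role here: only the worths $w(T, g^T)$ ever enter the computation. That is why the result recovers the externality-free value.
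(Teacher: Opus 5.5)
Your proof is correct and is essentially the paper's own argument. You prove $\bce_i(w,g^S)=\Sh_i(\bar{v}^{\mathrm{EF}}|_S)$ by induction on $|S|$, use $(g^S)_{-j}=g^{S\setminus\{j\}}$ together with the induction hypothesis and the balanced contributions property of the Shapley value to match $\bce_i(w,g^S)-\bce_j(w,g^S)$ with $\Sh_i(\bar{v}^{\mathrm{EF}}|_S)-\Sh_j(\bar{v}^{\mathrm{EF}}|_S)$, and then close with component efficiency against efficiency of the Shapley value.
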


\begin{proof}
	We show $\bce_i(w, g^S) = \Sh_i(\bar{v}^{\mathrm{EF}}|_S)$ by induction on~$|S|$.

	\emph{Induction basis case:} For $|S|=1$, every player is a singleton component, so \textbf{CE} gives $\bce_i(w, g^{\{i\}}) = w(\{i\}, \emptyset) = \bar{v}^{\mathrm{EF}}(\{i\}) = \Sh_i(\bar{v}^{\mathrm{EF}}|_{\{i\}})$.
	\emph{Induction hypothesis:}
	Assume $\bce_i(w, g^{S'}) = \Sh_i(\bar{v}^{\mathrm{EF}}|_{S'})$ for all $S' \subsetneq S$.

	\emph{Inductive step:}
	By \textbf{BC}, $\bce_i(w, g^S) - \bce_i(w, g^{S \setminus \{j\}}) = \bce_j(w, g^S) - \bce_j(w, g^{S \setminus \{i\}})$ for all $i,j \in S$.
	By the induction hypothesis, $\bce_i(w, g^{S \setminus \{j\}}) = \Sh_i(\bar{v}^{\mathrm{EF}}|_{S \setminus \{j\}})$, and using balanced contributions of the Shapley value gives $\bce_i(w, g^S) - \bce_j(w, g^S) = \Sh_i(\bar{v}^{\mathrm{EF}}|_S) - \Sh_j(\bar{v}^{\mathrm{EF}}|_S)$.
	Inserting into \textbf{CE} of BCE and efficiency of the Shapley value, $\sum_{i \in S} \bce_i(w, g^S) = \bar{v}^{\mathrm{EF}}(S) = \sum_{i \in S} \Sh_i(\bar{v}^{\mathrm{EF}}|_S)$ yields $\bce_i(w, g^S) = \Sh_i(\bar{v}^{\mathrm{EF}}|_S)$.
\end{proof}

The TU game $\bar{v}^{\mathrm{EF}}(S) = w(S, g^S)$ captures the natural ``insiders together, outsiders alone'' idea: each coalition~$S$ is evaluated in the network~$g^S$ where all outsiders are isolated singletons.
Proposition~\ref{prop:bce-complete} shows this value arises as a \emph{consequence} of \CE and \BC on~$g^N$ rather than an assumption.
In the partition function form (PFF) framework (see Section~\ref{sec:pff}), \citet{declippel2008} characterize~$\Sh(\bar{v}^{\mathrm{EF}})$ using a marginality axiom that somewhat isolates a player's payoff from her external effects: only what coalitions lose when she leaves towards a stand alone matters.\footnote{\citet{declippel2008} work with PFF games~$v(S, \mathcal{P})$. They define a player's \emph{intrinsic marginal contribution} to coalition~$S$ at partition~$\mathcal{P}$ as $v(S, \mathcal{P}) - v(S \setminus \{i\},\, \{S \setminus \{i\}, \{i\}\} \cup \mathcal{P}\setminus \{S\})$: player~$i$ leaves~$S$ to become an isolated singleton. \emph{Marginality} requires that a player's payoff depends on~$v$ only through these intrinsic marginal contributions.}
In contrast, the reduction to an externality-free allocation rule in Proposition~\ref{prop:bce-complete} emerges from \CE and \BC.

Note that Proposition~\ref{prop:bce-complete} is specific to~$g^N$; for a general network~$g$, $\bce_i(w, g)$ need not equal $\Sh_i(\bar{v}^{\mathrm{EF}})$.

\subsection{Invariance under graph-projection}\label{sec:graph-projection}

A network worth function~$w$ may encode information about links outside of some network~$g$ at hand --- for instance, $w(C,h)$ may depend on whether a particular link $\{i,j\} \notin g$ is present in~$h$. The BCE rule, however, only ever evaluates~$w$ at subnetworks of~$g$ to compute $\mathrm{BCE}(w, g)$, so that any information about links not in~$g$ is irrelevant. We make this precise by projecting~$w$ onto~$g$ and showing that the BCE rule is invariant under this projection.

Given a network worth function~$w$ and a network~$g$, define the \emph{$g$-projected worth function} $w^{g}$ by
\[
	w^{g}(C, h) \;=\; \sum_{S \in C/g} w(S,\, g|_{N/h}),
\]
where $C$ is a component of~$h$.
Here $g|_{N/h}$ takes the reference network~$g$ and strips out any edges that cross components of the current network~$h$.
Then, $w^{g}$ breaks each component~$C$ of~$h$ into its $g$-connected pieces and sums their worths at $g|_{N/h}$.

The projected worth function $w^g$ bakes in $g$'s structure upfront, discarding information the BCE rule would never use.
The following is a straightforward consequence of the inductive construction; its main role is to prepare the connection to PFF games in Section~\ref{sec:pff}.

\begin{corollary}\label{cor:graph-projection}
	For any network worth function~$w$ and network~$g$,
	\[
		\bce(w, g) \;=\; \bce(w^g, g).
	\]
\end{corollary}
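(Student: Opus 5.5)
The plan is to show that $w$ and $w^g$ agree on every pair $(C,h)$ with $h \subseteq g$ that the recursive construction of BCE at~$g$ ever touches. The claim then follows by an induction that mirrors the construction.

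First I would identify which networks occur. Formula~\eqref{eq:formula-full} at~$g$ uses $w(C,g)$ for $C \in N/g$ and, through~\eqref{eq:bc-tree}, BCE values at $g_{-j}$ and $g_{-p(j)}$. Unfolding the recursion, every network reached has the form $g_{-D}$ for some $D \subseteq N$, and the only worth values used are $w(C, g_{-D})$ with $C \in N/g_{-D}$. A simple induction on the number of links makes this precise: for every $h$ of the form $g_{-D}$, $\bce(w,h)$ is determined by the values $\{w(C,g_{-D'}) : D' \supseteq D,\ C \in N/g_{-D'}\}$. The base case $h=\emptyset$ is immediate. The inductive step uses $(g_{-D})_{-j} = g_{-(D\cup\{j\})}$, together with the fact that the minimal-index BFS tree of each component of $h$ depends only on~$h$ and not on~$w$.

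Second, the key computation. Take $h = g_{-D}$ and $C \in N/h$. Each component of $h$ is connected in $h \subseteq g$, so $C$ lies inside a single component of $g|_C$, giving $C/g = \{C\}$. Moreover, $g|_{N/h}$ keeps exactly the links of $g$ within components of~$h$. A link of $g$ whose endpoints lie in the same component of $g_{-D}$ has either both endpoints outside $D$, so it is in $g_{-D}$, or an endpoint in $D$. The latter is impossible unless that endpoint is an isolated singleton, and then the link would join two distinct components. Hence $g|_{N/h} = h$. It follows that $w^g(C,h) = w(C,h)$.

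Combining the two steps, BCE computed from $w$ and from $w^g$ at~$g$ use identical inputs at every stage of the recursion. Their outputs therefore coincide. I would phrase this cleanly as an induction on $|h|$ over $h\in\{g_{-D}\}$, showing $\bce(w,h)=\bce(w^g,h)$ directly, without the separate dependency lemma.

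The main obstacle is the identity $g|_{N/h}=h$. A careless reading could allow $h$ to be an arbitrary subnetwork of~$g$, where it fails. For example, with $h = g\setminus\{i,j\}$ and $i,j$ still connected, the link $\{i,j\}$ would survive in $g|_{N/h}$. So the restriction to networks of the form $g_{-D}$, which is precisely the structure of BC's withdrawal threats, must be stated carefully and used.
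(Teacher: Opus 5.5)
Your proposal is correct and follows the paper's proof: the recursion for $\bce(\cdot,g)$ only visits networks of the form $g_{-D}$, and on these $C/g=\{C\}$ and $g|_{N/(g_{-D})}=g_{-D}$, so $w^g$ and $w$ agree at every stage of the construction. Your explicit check of $g|_{N/h}=h$ and your induction on $|h|$ spell out what the paper states in one line. One wording fix: an endpoint in $D$ is always isolated in $g_{-D}$, not ``impossible unless'' isolated, so a link of $g$ meeting $D$ would join two distinct components of $h$.
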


\begin{proof}
	Every network visited by the recursive construction of $\bce(\cdot, g)$ is of the form~$g_{-D}$ for some $D \subseteq N$.
	Since $g_{-D}$ only removes edges of~$g$, every component~$C$ of~$g_{-D}$ is $g$-connected ($C/g = \{C\}$) and $g_{-D} = g|_{N/(g_{-D})}$.
	Therefore $w^g(C, g_{-D}) = w(C, g_{-D})$ at every level of the recursion, giving $\bce(w^g, g) = \bce(w, g)$.
\end{proof}

The FCE~rule is also invariant under graph-projection,  $\fce(w, g) = \fce(w^g, g)$, see Corollary~\ref{cor:fce-graph-projection} in Section~\ref{sec:pff-bce-vs-fce}.

\begin{example}\label{ex:graph-projection}
	Let $N = \{1,2,3\}$ and define $w(C, h) = 1$ if $3 \in C$ and $\{1,2\} \in h$, and $w(C,h) = 0$ otherwise: the worth function rewards components containing player~$3$ whenever the link~$\{1,2\}$ is present.
	Consider the two networks in Figure~\ref{fig:graph-projection-example}.
	\begin{figure}[!ht]
		\centering
		\begin{tikzpicture}[scale=1.0, every node/.style={font=\small}]
			\begin{scope}
				\node[anchor=south, font=\small\bfseries] at (1.1,0.6) {$g = \{\{1,2\}\}$};
				\node[player] (1) at (0.75,0) {$1$};
				\node[player] (2) at (2.25,0) {$2$};
				\node[player] (3) at (1.5,-1) {$3$};
				\draw[thick] (1) -- (2);
			\end{scope}
			\begin{scope}[xshift=5.5cm]
				\node[anchor=south, font=\small\bfseries] at (1.1,0.6) {$h = \{\{1,3\},\{2,3\}\}$};
				\node[player] (1b) at (0.75,0) {$1$};
				\node[player] (2b) at (2.25,0) {$2$};
				\node[player] (3b) at (1.5,-1) {$3$};
				\draw[thick] (1b) -- (3b);
				\draw[thick] (2b) -- (3b);
			\end{scope}
		\end{tikzpicture}
		\caption{Graph-projection can change~$w$ at networks~$h$ that are not subnetworks of~$g$: here $w(\{1,2,3\},h)=0$ but $w^g(\{1,2,3\},h)=1$. Yet the $\bce(w, g)$ only evaluates~$w$ on subnetworks of~$g$, where $w^g=w$, hence $\bce(w^g, g) = \bce(w, g)$.}
		\label{fig:graph-projection-example}
	\end{figure}
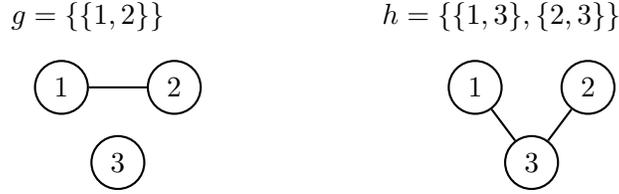
	Graph projection changes~$w$ at~$h$: the link~$\{1,2\}$ is absent in~$h$, so $w(\{1,2,3\}, h) = 0$, but the $g$-projection reads off~$g$ instead and returns $w^g(\{1,2,3\}, h) = w(\{3\}, g) + w(\{1,2\}, g) = 1$.
	Yet the proposition guarantees $\bce(w, g) = \bce(w^g, g)$, since the inductive construction only visits subnetworks of~$g$ where $w^g = w$.
\end{example}

%% ============================================================
\section{Connection to partition function form games}\label{sec:pff}
%% ============================================================

Sections~\ref{sec:prelim}--\ref{sec:consequences} work with network worth functions, i.e., $w(C,g)$ may depend on the full link structure of~$g$.
An alternative framework pairs a partition function form game---defined independently of any network---with a network~$g$.
This section clarifies how the characterizations of the BCE rule and the FCE rule carry over to this framework, and how the two values rely on the information of the network.

\subsection{BCE and FCE values for PFF games}\label{sec:pff-two-frameworks}

A \emph{partition function form (PFF) game} \citep{thrall1963} assigns a worth $v(C,\mathcal{P}) \in \mathbb{R}$ to every \emph{embedded coalition} $(C,\mathcal{P})$, where $\mathcal{P}$ is a partition of~$N$ and $C \in \mathcal{P}$.%
\footnote{A \emph{partition} of a finite set $N$ is a collection of nonempty, pairwise disjoint subsets of $N$ whose union is $N$.}
A PFF game~$v$ induces the worth function
$$w_{v}(C, g) = v(C,\, N/g).$$
The map $v \mapsto w_v$ is injective (every partition of~$N$ is realizable as $N/g$).
For any PFF-induced worth function~$w$, we write $v_w$ for the unique PFF game with $w_{v_w} = w$, so that $v_w$ and $w_v$ are inverses.

A \emph{PFF game with a network} \citep{feldman1996,navarro2007} is a pair $(v, g)$.
A \emph{value} $\Psi$ on PFF games with a network
assigns a payoff $\Psi_i(v,g)\in\mathbb{R}$ to each PFF game $v$, network $g$ and player $i$.
The \emph{FCE value} $\Psi^{\fce}(v, g) := \fce(w_v, g)$ is the unique value on PFF games with a network satisfying \CE and \Fair  \citep{feldman1996,navarro2007}.\footnote{\textbf{CE} on PFF games with networks: $\sum_{i \in C} \Psi_i(v, g) = v(C, N/g)$ for every component $C \in N/g$. \textbf{F} on PFF games with networks: $\Psi_i(v, g) - \Psi_i(v, g \setminus \{i,j\}) = \Psi_j(v, g) - \Psi_j(v, g \setminus \{i,j\})$ for every link $\{i,j\} \in g$.}

Analogously, we define the \emph{BCE value} by $$\Psi^{\bce}(v, g) := \bce(w_v, g).$$
This value inherits its characteristic properties from the BCE rule.

\begin{corollary}\label{cor:pff-characterization}
	$\Psi^{\bce}$ is the unique value on PFF games with a network satisfying \CE and \BC.\footnote{\textbf{BC} on PFF games with networks: $\Psi_i(v, g) - \Psi_i(v, g_{-j}) = \Psi_j(v, g) - \Psi_j(v, g_{-i})$ for every link $\{i,j\} \in g$.}
\end{corollary}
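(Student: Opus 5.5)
We transfer Theorem~\ref{thm:main} along the injective map $v \mapsto w_v$. The axioms on PFF games with a network are exactly the axioms on worth functions evaluated at PFF-induced worth functions. Indeed, for every $g$ and every component $C \in N/g$ we have $w_v(C,g) = v(C, N/g)$. So \CE for $\Psi$ at $(v,g)$ is literally \CE for the rule $w \mapsto \Psi(v_w, \cdot)$ at $(w_v,g)$. \BC involves only payoffs at $g$, $g_{-i}$ and $g_{-j}$, with no reference to the worth beyond the game itself, so it transfers verbatim.

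\emph{Existence.} Since $\bce$ satisfies \CE and \BC for every worth function by Theorem~\ref{thm:main}, it does so in particular for $w_v$. Then
\[
\sum_{i\in C}\Psi^{\bce}_i(v,g) = w_v(C,g) = v(C,N/g),
\]
and the \BC equation for $\Psi^{\bce}$ at a link $\{i,j\}\in g$ is the \BC equation for $\bce(w_v,\cdot)$ at that link.

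\emph{Uniqueness.} The subtle point is that a value $\Psi$ is only defined on PFF games, so we cannot directly invoke uniqueness in Theorem~\ref{thm:main}, which concerns rules on all of $\mathcal{W}$. Two options are available.

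The first option is to extend $\Psi$ to a rule $\varphi$ on all of $\mathcal{W}$. Set $\varphi(w,g)=\Psi(v_w,g)$ when $w$ is PFF-induced, and $\varphi(w,g)=\bce(w,g)$ otherwise. The axioms are per-$w$ statements (each fixes $w$ and varies only $g$), so $\varphi$ satisfies \CE and \BC on all of $\mathcal{W}$. Theorem~\ref{thm:main} then gives $\varphi=\bce$, hence $\Psi=\Psi^{\bce}$.

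The second option, if one prefers to be self-contained, is to rerun the uniqueness induction of Theorem~\ref{thm:main} with $v$ fixed, by induction on the number of links of $g$. At $g=\emptyset$, \CE pins down $\Psi_i(v,\emptyset)=v(\{i\},N/\emptyset)$. In the induction step, \BC on spanning-tree edges of each component $C$ determines the payoff differences, because $g_{-j}$ and $g_{-p(j)}$ have fewer links. \CE then fixes the level.

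The main point to check is the per-$w$ nature of the axioms, which justifies the extension (or, equivalently, the fixed-$v$ induction). This is immediate, since neither \CE nor \BC compares different games. Hence no genuine obstacle is expected.
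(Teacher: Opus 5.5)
Your proof is correct and matches the paper's. Existence is transferred verbatim from Theorem~\ref{thm:main} via $w_v(C,g)=v(C,N/g)$, and your second uniqueness option (fix $v$ and rerun the link-count induction) is exactly the paper's argument. Your first option is also valid: extend $\Psi$ to all of $\mathcal{W}$ by $\bce$ off the image of $v\mapsto w_v$, then invoke the uniqueness part of Theorem~\ref{thm:main} as a black box; this works because both axioms only compare payoffs under a single fixed worth function.
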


\begin{proof}
	\emph{Existence.}
	By Theorem~\ref{thm:main}, the BCE rule satisfies \textbf{CE} and \textbf{BC} for every worth function, in particular for~$w_v$.
	Since $\Psi^{\bce}_i(v,g) = \bce_i(w_v, g)$ by definition and $w_v(C,g) = v(C, N/g)$, \textbf{CE} of $\bce$ at $(w_v, g)$ yields $\sum_{i \in C} \Psi^{\bce}_i(v,g) = v(C, N/g)$ for every $C \in N/g$.
	Likewise, \textbf{BC} of $\bce$ at $(w_v, g)$ translates directly to \textbf{BC} for~$\Psi^{\bce}$, because $\Psi^{\bce}_i(v, g') = \bce_i(w_v, g')$ holds for every network~$g'$.
	\emph{Uniqueness.}
	Fix~$v$; the standard induction argument based on the number of links in~$g$ (as in the proof of Theorem~\ref{thm:main}) gives $\Psi(v, g) = \bce(w_v, g)$.
\end{proof}

\subsection{Graph restriction}\label{sec:pff-framework}

The \emph{graph-restricted PFF game} $v^g$ \citep{feldman1996,navarro2007} is defined by
\[
	v^g(C, \mathcal{P}) \;=\; \sum_{S \in C/g} v(S,\, g/\mathcal{P}),
	\qquad g/\mathcal{P} \;=\; \bigcup_{B \in \mathcal{P}} B/g.
\]
\begin{lemma}\label{lem:projection-restriction}
	$(w_v)^g = w_{v^g}$ for every PFF game~$v$ and network~$g$.
\end{lemma}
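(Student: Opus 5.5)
Both sides are worth functions, so the plan is to fix an arbitrary network $h$ and a component $C \in N/h$, and unfold each side's definition until the two expressions coincide.

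\emph{Left-hand side.} By the definition of the $g$-projection and of $w_v$,
\[
(w_v)^g(C,h) \;=\; \sum_{S \in C/g} w_v\bigl(S,\, g|_{N/h}\bigr) \;=\; \sum_{S \in C/g} v\bigl(S,\, N/(g|_{N/h})\bigr).
\]
For this to make sense, each $S \in C/g$ must be a component of $g|_{N/h}$. I will check this in passing. The network $g|_{N/h}$ keeps exactly the links of $g$ inside blocks of $N/h$. Its components are therefore the components of $g|_B$ for $B \in N/h$, that is, the sets $B/g$. Taking $B = C$ shows that every $S \in C/g$ is indeed one of them.

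\emph{Right-hand side.} By definition,
\[
w_{v^g}(C,h) \;=\; v^g(C, N/h) \;=\; \sum_{S \in C/g} v\bigl(S,\, g/(N/h)\bigr), \qquad g/(N/h) \;=\; \bigcup_{B \in N/h} B/g .
\]

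\emph{Matching the two.} The lemma therefore reduces to the partition identity
\[
N/\bigl(g|_{N/h}\bigr) \;=\; \bigcup_{B \in N/h} B/g .
\]
This identity is exactly the observation already made above. Components of $g|_{N/h}$ cannot cross blocks of $N/h$, because every link crossing blocks has been deleted. Within a block $B$, the links that remain are precisely those of $g|_B$. Hence the components inside $B$ are the elements of $B/g$.

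I will spell this out carefully from the definitions $g|_{\mathcal{P}} = \bigcup_{B \in \mathcal{P}} g|_B$ and $S/g = $ components of $g|_S$. One detail needs care: two different blocks may both induce edgeless pieces, and these must be treated as separate singleton components.

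No real obstacle is expected. The only point requiring attention is this notational bookkeeping: $g/\mathcal{P}$ denotes a partition of $N$, while $g|_{\mathcal{P}}$ denotes a network. Once the partition identity is established, the two sums agree term by term, since both are indexed by the same set $C/g$.
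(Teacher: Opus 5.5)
Your proposal is correct and matches the paper's proof: unfold both sides at a fixed component $C$ of $h$ and reduce to the partition identity $N/(g|_{N/h}) = \bigcup_{B \in N/h} B/g = g/(N/h)$. Your extra check that each $S \in C/g$ is a component of $g|_{N/h}$, so that $w_v(S, g|_{N/h})$ is defined, is something the paper leaves implicit.
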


\begin{proof}
	For any component~$C$ of~$h$,
	\[
		(w_v)^g(C, h)
		\;=\; \sum_{S \in C/g} w_v(S,\, g|_{N/h})
		\;=\; \sum_{S \in C/g} v(S,\, N/(g|_{N/h})).
	\]
	Since $g|_{N/h}$ restricts~$g$ to edges within each block of~$N/h$, its connected components are exactly the $g$-components within each block, i.e., $N/(g|_{N/h}) = \bigcup_{B \in N/h} B/g = g/(N/h)$.
	Hence $(w_v)^g(C, h) = \sum_{S \in C/g} v(S,\, g/(N/h)) = v^g(C, N/h) = w_{v^g}(C, h)$.
\end{proof}

\begin{corollary}\label{cor:graph-restriction}
	$\Psi^{\bce}(v, g) = \Psi^{\bce}(v^g, g)$ for every PFF game~$v$ and network~$g$.
\end{corollary}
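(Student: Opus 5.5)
The plan is to chain the definitions with the two results just established. By definition of the BCE value, $\Psi^{\bce}(v^g, g) = \bce(w_{v^g}, g)$. Lemma~\ref{lem:projection-restriction} identifies $w_{v^g}$ with $(w_v)^g$, so this equals $\bce((w_v)^g, g)$. Corollary~\ref{cor:graph-projection}, applied to the worth function $w_v$ and the network $g$, then gives $\bce((w_v)^g, g) = \bce(w_v, g)$. By definition this is $\Psi^{\bce}(v, g)$. Written as a single chain:
\[
	\Psi^{\bce}(v^g, g) = \bce(w_{v^g}, g) = \bce\bigl((w_v)^g, g\bigr) = \bce(w_v, g) = \Psi^{\bce}(v, g).
\]

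I would check two small points. First, $v^g$ must be a genuine PFF game, that is, defined on every embedded coalition $(C,\mathcal{P})$ with $C\in\mathcal{P}$. This holds because $g/\mathcal{P}$ is again a partition of $N$ and each $S\in C/g$ is one of its blocks. Hence $w_{v^g}$ is a well-defined worth function, and the BCE value applies to it. Second, the equality in Lemma~\ref{lem:projection-restriction} holds pointwise for every pair $(C,h)$, so the two worth functions are literally equal. No further argument is needed to substitute one for the other inside $\bce(\cdot, g)$.

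There is no real obstacle; the only substantive ingredient is Corollary~\ref{cor:graph-projection}. If one wanted a self-contained argument instead, one could rerun its recursion directly. Every network visited in computing $\bce(\cdot,g)$ has the form $g_{-D}$. For such a network, $N/g_{-D} = g/(N/g_{-D})$, and each component $C$ of $g_{-D}$ satisfies $C/g=\{C\}$. Therefore $v^g(C, N/g_{-D}) = v(C, N/g_{-D})$, and the two recursions coincide.
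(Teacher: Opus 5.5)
Your proof is correct and is the same as the paper's: the identical chain through Lemma~\ref{lem:projection-restriction} and Corollary~\ref{cor:graph-projection}, written in the reverse direction. Your two side checks (that $v^g$ is a well-defined PFF game, and the self-contained recursion argument via $N/g_{-D} = g/(N/g_{-D})$) are also correct, though the main argument does not need them.
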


\begin{proof}
	$\Psi^{\bce}(v, g) = \bce(w_v, g) = \bce((w_v)^g, g)$ by Corollary~\ref{cor:graph-projection}, which equals $\bce(w_{v^g}, g) = \Psi^{\bce}(v^g, g)$ by Lemma~\ref{lem:projection-restriction}.
\end{proof}

A PFF game $v(C, \mathcal{P})$ can encode dependence on arbitrary partitions $\mathcal{P}$, including those not realizable as the component structure $N/h$ of any network. Graph restriction strips this away: $v^g(C, \mathcal{P})$ depends on $\mathcal{P}$ only through $g/\mathcal{P}$, so $v^g$ carries no more information than the worth function $w_{v^g} = (w_v)^g$. Graph-restriction invariance then implies that only $v^g$ matters to a value, not the original~$v$.
Conversely, every graph-projected worth function is PFF-induced, i.e., an allocation rule that does not distinguish between $w$ and its graph projection $w^g$ can equivalently be viewed as a value on PFF games with networks.

\begin{proposition}\label{rem:wg-pff}
	For any network worth function~$w$ and network~$g$, the projected worth function~$w^g$ is PFF-induced, i.e., there is a unique PFF game $v$ such that~$w^g = w_v$.
\end{proposition}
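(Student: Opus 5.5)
Uniqueness is immediate from the injectivity of $v \mapsto w_v$ noted above: if $w_v = w_{v'}$ then $v(C,\mathcal{P}) = v'(C,\mathcal{P})$ for every embedded coalition, since every partition is realizable as $N/h$ for some $h$. So the work is existence, and I would construct $v$ explicitly, guided by the formula for $v^g$. Define
\[
	v(C, \mathcal{P}) \;:=\; \sum_{S \in C/g} w\bigl(S,\, g|_{\mathcal{P}}\bigr),
\]
where $g|_{\mathcal{P}} = \bigcup_{B \in \mathcal{P}} g|_B$ is the notation from Section~\ref{sec:prelim}. Before using this, I check that it is well defined: each $S \in C/g$ must be a connected component of the network $g|_{\mathcal{P}}$, since $w$ is only defined on pairs (component, network). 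Because $C \in \mathcal{P}$ is a block, $g|_{\mathcal{P}}$ restricted to $C$ equals $g|_C$. Moreover, no link of $g|_{\mathcal{P}}$ leaves $C$. Hence the components of $g|_{\mathcal{P}}$ contained in $C$ are exactly the blocks of $C/g$, so each term $w(S, g|_{\mathcal{P}})$ is meaningful.

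Next, I verify $w_v = w^g$. Take any network $h$ and any component $C \in N/h$. Then
\[
	w_v(C,h) = v(C, N/h) = \sum_{S \in C/g} w\bigl(S,\, g|_{N/h}\bigr),
\]
and the right-hand side is literally the definition of $w^g(C,h)$. This step only requires that the network $g|_{\mathcal{P}}$ with $\mathcal{P} = N/h$ agrees with the network $g|_{N/h}$ used in the projection. It does, by definition, since both keep exactly the links of $g$ lying within a block of $N/h$.

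The only genuine subtlety is the well-definedness check in the first step, namely that $S \in C/g$ is a component of $g|_{\mathcal{P}}$ and not merely a $g$-connected subset. This is where the condition $C \in \mathcal{P}$ is used. No other obstacle is expected.

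Finally, I would note that $v$ is then exactly $v_{w^g}$ in the paper's notation. When $w = w_{v'}$, Lemma~\ref{lem:projection-restriction} shows this $v$ equals $(v')^g$, which serves as a consistency check on the construction.
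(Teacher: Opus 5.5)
Your proposal is correct and follows essentially the same route as the paper. You use the same explicit inducing game $v(C,\mathcal{P}) = \sum_{S \in C/g} w(S, g|_{\mathcal{P}})$ and get uniqueness from injectivity of $v \mapsto w_v$. Your check that each $S \in C/g$ is a component of $g|_{\mathcal{P}}$ (using $C \in \mathcal{P}$) makes explicit a well-definedness point the paper leaves implicit.
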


\begin{proof}
	The partition $C/g$ depends only on~$C$ and~$g$, not on~$h$.
	The network $g|_{N/h} = \bigcup_{B \in N/h} g|_B$ is determined by~$N/h$.
	Hence, $w^g(C,h) = \sum_{S \in C/g} w(S, g|_{N/h})$ depends on~$h$ only through the partition~$N/h$.
	The inducing PFF game is $v(C, \mathcal{P}) = \sum_{S \in C/g} w(S, g|_{\mathcal{P}})$.
	Uniqueness holds because $v \mapsto w_v$ is injective: every partition of~$N$ is realizable as~$N/h$ for some network~$h$.
\end{proof}

\begin{remark}
	Graph-projection invariance (for rules) and graph-restriction invariance (for values) are \emph{consequences} of \CE and \BC (Corollary~\ref{cor:graph-projection} and Corollary~\ref{cor:pff-characterization}); for the FCE side, the analogous result follows from \citeauthor{navarro2007}'s \citeyearpar{navarro2007} formula~\eqref{eq:navarro-fce}. They are not assumptions.
	If one were to restrict attention to rules or values that are a priori invariant under graph projection/restriction, then studying worth functions or PFF games would be interchangeable: in one direction, $w^g$ is PFF-induced (Proposition~\ref{rem:wg-pff}); in the other, every PFF game~$v$ embeds into worth functions via the injective map $v \mapsto w_v$ (every partition of~$N$ is realizable as $N/g$).
	However, Theorem~\ref{thm:main} and Corollary~\ref{cor:pff-characterization}, as well as Theorem~\ref{thm:navarro} and \citeauthor{feldman1996}'s \citeyearpar{feldman1996} characterization, impose no such restriction: they characterize their respective allocation rules from the axioms alone, making each a genuine result.
\end{remark}

\subsection{The FCE value reduces to a graph-free value; the BCE value does not}\label{sec:pff-bce-vs-fce}

A partition $\mathcal{Q}$ is \emph{finer} than $\mathcal{P}$, written $\mathcal{Q} \preccurlyeq \mathcal{P}$, if every block of~$\mathcal{Q}$ is a subset of some block of~$\mathcal{P}$.
\citet{myerson1977a} introduces the \emph{$\preccurlyeq$-unanimity games} $u^{\preccurlyeq}_{(T,\mathcal{Q})}$, defined by $u^{\preccurlyeq}_{(T,\mathcal{Q})}(C,\mathcal{P}) = 1$ if $T \subseteq C$ and $\mathcal{Q} \preccurlyeq \mathcal{P}$, and $0$ otherwise.
Every PFF game has a unique representation $v = \sum_{(T,\mathcal{Q})} b^{\preccurlyeq}_{(T,\mathcal{Q})}(v)\,u^{\preccurlyeq}_{(T,\mathcal{Q})}$.
The \emph{PFF value} $\Phi^{\preccurlyeq}$ \citep{myerson1977a} distributes each coefficient equally among the members of~$T$:
$$\Phi^{\preccurlyeq}_i(v) = \sum_{(T,\mathcal{Q}):\, i \in T} \frac{b^{\preccurlyeq}_{(T,\mathcal{Q})}(v)}{|T|}, \quad \textrm{where } b^{\preccurlyeq}_{(T,\mathcal{Q})}(v) = v(T,\mathcal{Q}) - \sum_{\substack{(C,\mathcal{P})\preccurlyeq (T,\mathcal{Q}) \\ (C,\mathcal{P}) \neq (T,\mathcal{Q})}}b^{\preccurlyeq}_{(C,\mathcal{P})}(v).$$

\citet{feldman1996} shows $\Psi^{\fce}(v, g) = \Phi^{\preccurlyeq}(v^g)$: the FCE value applies the PFF value $\Phi^{\preccurlyeq}$ to the graph-restricted game.
The network enters only through~$v^g$; after graph restriction, it plays no further role.
\citet{navarro2007} (Theorem~4.2) extends this to general worth functions:
\begin{equation}\label{eq:navarro-fce}
	\fce(w, g) \;=\; \Phi^{\preccurlyeq}(v_{w^g}),
	\qquad \textrm{where }v_{w^g}(C, \mathcal{P}) \;=\; \sum_{S \in C/g} w(S,\, g|_{\mathcal{P}}).
\end{equation}

As an immediate consequence, the FCE rule is invariant under graph-projection:

\begin{corollary}\label{cor:fce-graph-projection}
	For any network worth function~$w$ and network~$g$,\; $$\fce(w^g, g) = \fce(w, g).$$
\end{corollary}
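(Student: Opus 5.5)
The plan is to derive the claim directly from Navarro's formula~\eqref{eq:navarro-fce}, showing that the right-hand side is the same whether we start from $w$ or from $w^g$. By~\eqref{eq:navarro-fce} applied to the worth function $w^g$ in place of $w$, we have $\fce(w^g, g) = \Phi^{\preccurlyeq}(v_{(w^g)^g})$. Hence it suffices to prove the idempotence identity $(w^g)^g = w^g$. Then $v_{(w^g)^g} = v_{w^g}$, and so $\fce(w^g,g) = \Phi^{\preccurlyeq}(v_{w^g}) = \fce(w,g)$.

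To prove idempotence, fix a network $h$ and a component $C \in N/h$, and unfold the definition:
\[
(w^g)^g(C,h) = \sum_{S \in C/g} w^g\bigl(S,\, g|_{N/h}\bigr) = \sum_{S \in C/g} \;\sum_{S' \in S/g} w\bigl(S',\, g|_{N/(g|_{N/h})}\bigr).
\]
Two simplifications are needed. First, each $S \in C/g$ is $g$-connected, so $S/g = \{S\}$ and the inner sum collapses to the single term $S' = S$. Second, we need $g|_{N/(g|_{N/h})} = g|_{N/h}$. Write $h' = g|_{N/h}$. Every component of $h'$ lies inside a block of $N/h$, and a link of $g$ inside a block of $N/h' $ is a link of $g$ whose endpoints are $h'$-connected, hence lie in a common block of $N/h$. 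So $g|_{N/h'} \subseteq g|_{N/h} = h'$. Conversely, every link of $h'$ joins two players in the same $h'$-component, so $h' \subseteq g|_{N/h'}$. This is exactly the computation $N/(g|_{N/h}) = g/(N/h)$ already used in Lemma~\ref{lem:projection-restriction}.

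Together these give $(w^g)^g(C,h) = \sum_{S\in C/g} w(S, g|_{N/h}) = w^g(C,h)$.

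One detail needs checking: $S \in C/g$ must be a component of $g|_{N/h}$, so that $w^g(S, g|_{N/h})$ is evaluated at a legitimate pair (component, network). This holds because $S \subseteq C$ is contained in a block of $N/h$ and is a maximal $g$-connected subset of $C$. I expect this bookkeeping of the definitions to be the only delicate step; the rest is immediate from~\eqref{eq:navarro-fce}.

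Alternatively, one could argue as in Corollary~\ref{cor:graph-projection}: a uniqueness induction over the networks $g\setminus L$ visited by the fairness recursion, using that $w^g = w$ on subnetworks of $g$. That route is not directly available, however, because fairness visits the networks $g \setminus \{i,j\}$ and $w^g(C, g') = w(C,g')$ only holds when every component of $g'$ is $g$-connected, which is true for all subnetworks of $g$. On reflection this does hold, so it would also work. Still, I would present the formula-based argument, since the paper frames the corollary as an immediate consequence of~\eqref{eq:navarro-fce}.
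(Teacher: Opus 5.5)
Your main argument is the paper's: apply \eqref{eq:navarro-fce} to $w^g$ and use idempotence $(w^g)^g = w^g$. The paper only asserts idempotence; you verify it correctly, including $g|_{N/(g|_{N/h})} = g|_{N/h}$ and the check that each $S \in C/g$ is a component of $g|_{N/h}$.

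Your closing aside is wrong, though. The identity $w^g(C,g') = w(C,g')$ needs not only $C/g=\{C\}$ but also $g|_{N/g'} = g'$. That second condition holds for the networks $g_{-D}$ visited by the BCE recursion. It fails for the networks $g\setminus L$ visited by the fairness recursion: for the triangle $g$ on $N=\{1,2,3\}$ and $g' = g\setminus\{1,3\}$ one gets $g|_{N/g'} = g$, so $w^g(N,g') = w(N,g) \neq w(N,g')$ in general. Hence the direct argument of Corollary~\ref{cor:graph-projection} does not carry over, which is why the paper routes this result through \eqref{eq:navarro-fce}.
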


\begin{proof}
	The projection is idempotent: $(w^g)^g = w^g$, so $v_{(w^g)^g} = v_{w^g}$.
	Applying~\eqref{eq:navarro-fce} gives
	$\fce(w^g, g) = \Phi^{\preccurlyeq}(v_{(w^g)^g}) = \Phi^{\preccurlyeq}(v_{w^g}) = \fce(w, g)$.
\end{proof}

The BCE value genuinely requires the network.
The \emph{externality-free value}~$\Phi^{\mathrm{EF}}$ \citep{phamdo2007,declippel2008} is given by $$\Phi^{\mathrm{EF}}_i(v) = \Sh_i(\bar{v}^{\mathrm{EF}}), \quad \textrm{where }\bar{v}^{\mathrm{EF}}(S) = v(S,\, \{S\} \cup \{\{j\} : j \notin S\}).$$
On the complete network, $v^{g^N} = v$ and Proposition~\ref{prop:bce-complete} gives:

\begin{corollary}[BCE on $g^N$]\label{cor:bce-complete-pff}
	For any PFF game~$v$,\; $\Psi^{\bce}_i(v, g^N) = \Phi^{\mathrm{EF}}_i(v)$.
\end{corollary}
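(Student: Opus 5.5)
The corollary follows by chaining three earlier results: the definition of $\Psi^{\bce}$, Corollary~\ref{cor:graph-restriction}, and Proposition~\ref{prop:bce-complete}. Fix a PFF game~$v$. By definition $\Psi^{\bce}_i(v, g^N) = \bce_i(w_v, g^N)$. Proposition~\ref{prop:bce-complete}, applied to the worth function $w = w_v$, gives
\[
\bce_i(w_v, g^N) = \Sh_i(\bar{v}^{\mathrm{EF}}_{w_v}), \qquad \bar{v}^{\mathrm{EF}}_{w_v}(S) = w_v(S, g^S).
\]
So it remains to identify this TU game with the one defining $\Phi^{\mathrm{EF}}$.

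For that identification I unfold $w_v(S, g^S) = v(S, N/g^S)$. The network $g^S$ makes $S$ a clique and isolates every outsider. Hence $N/g^S = \{S\} \cup \{\{j\} : j \notin S\}$, and $S$ is indeed a component of $g^S$, so the expression is well defined for nonempty~$S$. This gives
\[
\bar{v}^{\mathrm{EF}}_{w_v}(S) = v\bigl(S, \{S\} \cup \{\{j\}: j\notin S\}\bigr),
\]
which is exactly the TU game in the definition of $\Phi^{\mathrm{EF}}$. Therefore $\Psi^{\bce}_i(v,g^N) = \Sh_i(\bar v^{\mathrm{EF}}) = \Phi^{\mathrm{EF}}_i(v)$.

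The remark $v^{g^N} = v$ stated before the corollary is consistent with this, though not strictly needed. Every $C \subseteq N$ is $g^N$-connected, so $C/g^N = \{C\}$ and $g^N/\mathcal{P} = \mathcal{P}$. Combined with Corollary~\ref{cor:graph-restriction}, this shows that passing to the graph-restricted game changes nothing on the complete network.

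I do not expect a real obstacle. The only point needing care is the conventions. For $|S| = 1$, $g^S$ is the empty network and $N/g^S$ is the partition into singletons, which matches the base case of Proposition~\ref{prop:bce-complete}. The empty coalition is assigned worth $0$ in both TU games.
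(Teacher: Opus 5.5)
Your proof is correct and follows the same route as the paper. The paper likewise gets the corollary by applying Proposition~\ref{prop:bce-complete} to $w_v$, with the identification $w_v(S,g^S)=v(S,\{S\}\cup\{\{j\}:j\notin S\})$ left implicit; you are also right that the remark $v^{g^N}=v$ (and hence Corollary~\ref{cor:graph-restriction}, which your opening sentence lists) is not needed for this argument.
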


On other networks, no such reduction exists: the BCE value cannot be written as a function of~$v^g$ alone.

\begin{example}\label{ex:no-phi}
	Let $N = \{1,2,3\}$ and $v = u^{\preccurlyeq}_{(\{3\},\,\{\{1,2\},\{3\}\})}$, i.e., $v(C,\mathcal{P}) = 1$ if $\{3\} \subseteq C$ and players~$1$ and~$2$ belong to the same block of~$\mathcal{P}$, and $v(C,\mathcal{P}) = 0$ otherwise.\footnote{This is $v_{w^g}$ for the worth function~$w$ of Example~\ref{ex:graph-projection}, which checks for the \emph{edge} $\{1,2\}$ rather than for $1$ and $2$ being in the same partition block. Moreover: $\bce(w, g) = (0,0,1)$, $\bce(w, g')  = (\tfrac{1}{3},\tfrac{1}{3},\tfrac{1}{3})$, and $\fce(w, g) = \fce(w, g') = (0,0,1)$.}
	\begin{figure}[ht]
		\centering
		\begin{tikzpicture}[scale=1.0, every node/.style={font=\small}]
			\begin{scope}
				\node[anchor=south, font=\small\bfseries] at (0.75,1.6) {$g$};
				\node[player] (1) at (0,1) {$1$};
				\node[player] (2) at (1.5,1) {$2$};
				\node[player] (3) at (0.75,0) {$3$};
				\draw[thick] (1) -- (2);
				\node[below, gray, font=\scriptsize] at (0.75,-0.4) {$\Psi^{\bce}(v, g) = (0,\, 0,\, 1)$};
				\node[below, gray, font=\scriptsize] at (0.75,-0.85) {$\Psi^{\fce}(v, g) = (0,\, 0,\, 1)$};
			\end{scope}
			\begin{scope}[xshift=5.5cm]
				\node[anchor=south, font=\small\bfseries] at (0.75,1.6) {$g'$};
				\node[player] (1b) at (0,1) {$1$};
				\node[player] (2b) at (1.5,1) {$2$};
				\node[player] (3b) at (0.75,0) {$3$};
				\draw[thick] (1b) -- (2b);
				\draw[thick] (1b) -- (3b);
				\node[below, gray, font=\scriptsize] at (0.75,-0.4) {$\Psi^{\bce}(v, g') = (\tfrac{1}{3},\, \tfrac{1}{3},\, \tfrac{1}{3})$};
				\node[below, gray, font=\scriptsize] at (0.75,-0.85) {$\Psi^{\fce}(v, g') = (0,\, 0,\, 1)$};
			\end{scope}
		\end{tikzpicture}
		\caption{Two networks with $v^g = v^{g'}$ for $v = u^{\preccurlyeq}_{(\{3\},\,\{\{1,2\},\{3\}\})}$. The FCE value is the same for both; the BCE value differs. To the FCE value, it does not matter whether player~$1$ can communicate the threat of deleting $\{1,2\}$ to player~$3$; the BCE value accounts for this through balanced contributions.}
		\label{fig:no-phi}
	\end{figure}
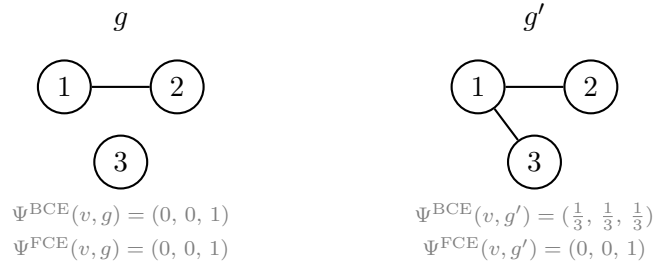
	Consider the two networks in Figure~\ref{fig:no-phi}: $g = \bigl\{\{1,2\}\bigr\}$ and $g' = \bigl\{\{1,2\}, \{1,3\}\bigr\}$.
	Since both networks contain the link~$\{1,2\}$, $v^g = v^{g'}$.
	Yet the BCE value distinguishes them: at~$g$, player~$3$ is a singleton component with worth~$1$, so $\Psi^{\bce}(v, g) = (0,0,1)$.
	At~$g'$, all three form one component; balanced contributions forces the equal split $\Psi^{\bce}(v, g') = (\tfrac{1}{3},\tfrac{1}{3},\tfrac{1}{3})$.
\end{example}

\begin{corollary}\label{cor:no-phi}
	There exist a PFF game~$v$ and networks $g, g'$ with $v^g = v^{g'}$ but $\Psi^{\bce}(v, g) \neq \Psi^{\bce}(v, g')$.
\end{corollary}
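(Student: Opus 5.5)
The corollary follows directly from Example~\ref{ex:no-phi}, so the plan is to make the computations there rigorous. Take $N=\{1,2,3\}$, $v=u^{\preccurlyeq}_{(\{3\},\{\{1,2\},\{3\}\})}$, $g=\{\{1,2\}\}$ and $g'=\{\{1,2\},\{1,3\}\}$. Three things must be checked: (i) $v^g=v^{g'}$; (ii) $\Psi^{\bce}(v,g)=(0,0,1)$; (iii) $\Psi^{\bce}(v,g')=(\tfrac13,\tfrac13,\tfrac13)$.

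\textbf{Step (i).} Since $v(C,\mathcal{P})=1$ iff $3\in C$ and $1,2$ share a block of $\mathcal{P}$, I would compute $v^g(C,\mathcal{P})=\sum_{S\in C/g}v(S,g/\mathcal{P})$ and the analogue for $g'$. The sum receives a nonzero term only from the piece $S\ni 3$, which contributes $1$ exactly when $1,2$ lie in a common block of $g/\mathcal{P}$, respectively $g'/\mathcal{P}$. Both $g$ and $g'$ contain the link $\{1,2\}$. Hence $1,2$ are in the same block of $g/\mathcal{P}$ iff they are in the same block of $\mathcal{P}$, and the same holds for $g'/\mathcal{P}$. So both restricted games equal $1$ iff $3\in C$ and $1,2$ share a block of $\mathcal{P}$, which is just $v$. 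Therefore $v^g=v^{g'}=v$.

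\textbf{Step (ii).} At $g$ the components are $\{1,2\}$ and $\{3\}$, with $N/g=\{\{1,2\},\{3\}\}$. By \CE, player $3$ receives $v(\{3\},N/g)=1$ and the pair $\{1,2\}$ shares $0$. For \BC on $\{1,2\}$, both $g_{-1}$ and $g_{-2}$ are the empty network, whose partition separates $1$ from $2$, so all payoffs there are $0$. Hence $\Psi_1(v,g)=\Psi_2(v,g)$, both equal $0$.

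\textbf{Step (iii).} At $g'$ there is a single component $N$ with $v(N,\{N\})=1$. I would use Corollary~\ref{cor:pff-characterization}, so that \BC holds on both links, and evaluate the subnetworks:
\begin{itemize}
\item $g'_{-1}=\emptyset$, so all payoffs are $0$.
\item $g'_{-2}=\{\{1,3\}\}$ has partition $\{\{1,3\},\{2\}\}$, in which $1,2$ are separated, so all payoffs are $0$.
\item $g'_{-3}=\{\{1,2\}\}=g$, which by Step (ii) gives $(0,0,1)$.
\end{itemize}
\BC on $\{1,2\}$ reads $\Psi_1-\Psi_1(g'_{-2})=\Psi_2-\Psi_2(g'_{-1})$, that is, $\Psi_1=\Psi_2$. \BC on $\{1,3\}$ reads $\Psi_1-\Psi_1(g'_{-3})=\Psi_3-\Psi_3(g'_{-1})$, that is, $\Psi_1-0=\Psi_3-0$. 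Combined with efficiency $\Psi_1+\Psi_2+\Psi_3=1$, this gives the equal split $(\tfrac13,\tfrac13,\tfrac13)$.

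Comparing Steps (ii) and (iii), $\Psi^{\bce}(v,g)\neq\Psi^{\bce}(v,g')$, even though $v^g=v^{g'}$ by Step (i). The only delicate point is Step (i), where the graph-restricted definition must be unwound carefully. The rest is direct bookkeeping using uniqueness from Corollary~\ref{cor:pff-characterization}.
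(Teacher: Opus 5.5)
Your proposal is correct and follows the paper's own route. The corollary is witnessed by Example~\ref{ex:no-phi} with $v=u^{\preccurlyeq}_{(\{3\},\{\{1,2\},\{3\}\})}$, $g=\{\{1,2\}\}$ and $g'=\{\{1,2\},\{1,3\}\}$; you make explicit the check $v^g=v^{g'}$ via the link $\{1,2\}$ lying in both networks, and the \textbf{CE}/\textbf{BC} bookkeeping that yields $(0,0,1)$ versus $(\tfrac13,\tfrac13,\tfrac13)$. The one step stated quickly, that all payoffs vanish at $g'_{-2}=\{\{1,3\}\}$, follows at once from \textbf{CE} together with \textbf{BC} on $\{1,3\}$ there, since isolating either endpoint gives the empty network.
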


\begin{remark}
	The FCE value, by contrast, factors through the graph-restricted game: $\Psi^{\fce}(v, g) = \Phi^{\preccurlyeq}(v^g)$.
	In Example~\ref{ex:no-phi}, $v^g = v^{g'}$ implies $\Psi^{\fce}(v, g) = \Psi^{\fce}(v, g')$.
	The link~$\{1,3\}$ changes the balanced-contributions equations --- information that graph restriction discards and~$\Phi^{\preccurlyeq}$ never sees.
\end{remark}

%% ============================================================
\section{Concluding remarks}\label{sec:conclusion}
%% ============================================================

The BCE rule is the unique allocation rule satisfying \CE and \BC for network games with externalities.
It coincides with the Jackson--Wolinsky's generalization of the Myerson value \citep{jackson1996,myerson1977b} in the absence of externalities, and it recovers the externality-free value \citep{phamdo2007,declippel2008} on the complete network.
Unlike the FCE rule, which factors through the graph-restricted game and a known partition function form value, the BCE rule is defined purely through an inductive construction and admits no known closed-form expression, dividend decomposition, or potential function.

This limitation appears structural rather than incidental.
The Hart--Mas-Colell potential \citep{hart1989} underpins both the closed-form characterization and the non-cooperative implementation of the Shapley value.
Its construction relies on balanced contributions holding for \emph{all} pairs of players \citep[see][for a survey of equivalent properties]{casajus2018}---precisely the pair-wise condition \BCplus that we have shown to be incompatible with component efficiency in the presence of externalities (Corollary~\ref{cor:pair-bc-incompatible}).
Similarly, the bidding mechanisms of \citet{perezcastrillo2001} and \citet{slikker2007} use symmetric bidding rounds in which every player in a component can become the proposer; the resulting equilibrium conditions invoke balanced contributions for all connected pairs, not just adjacent ones.
The same applies to bidding mechanisms for partition function form games that extend the P\'{e}rez-Castrillo--Wettstein framework, such as \citet{ju2025}.
Since the BCE rule satisfies balanced contributions only on edges, these standard techniques do not directly extend to our setting.

Whether the BCE rule admits alternative characterizations---such as a marginal-contribution formula, a dividend representation, or a non-cooperative implementation based on tree-structured negotiation protocols---remains an open question for future research.

\clearpage
%% ============================================================
\appendix

%% ============================================================
\section{Notation}\label{app:notation}
%% ============================================================

{\small\linespread{1.0}\selectfont
	\renewcommand{\arraystretch}{1.15}
	\begin{longtable}{@{}lp{9cm}@{}}
		\toprule
		\textbf{Symbol}                                          & \textbf{Meaning}                                                                                                                                    \\
		\midrule
		\endfirsthead
		\multicolumn{2}{r}{\textit{(continued from previous page)}}                                                                                                                                                    \\[2pt]
		\toprule
		\textbf{Symbol}                                          & \textbf{Meaning}                                                                                                                                    \\
		\midrule
		\endhead
		\midrule
		\multicolumn{2}{r}{\textit{(continued on next page)}}                                                                                                                                                          \\
		\endfoot
		\bottomrule
		\endlastfoot
		\multicolumn{2}{@{}l}{\textit{Players and networks}}                                                                                                                                                           \\[2pt]
		$N = \{1,\ldots,n\}$                                     & Finite player set                                                                                                                                   \\
		$g$                                                      & Network (set of unordered pairs / links)                                                                                                            \\
		$g^N$                                                    & Complete network on~$N$                                                                                                                             \\
		$\mathcal{G}$                                            & Set of all networks on~$N$                                                                                                                          \\
		$g|_S$                                                   & Subnetwork of~$g$ induced by $S \subseteq N$: $\{\{k,l\}\in g : k,l\in S\}$                                                                         \\
		$S/g$                                                    & Partition of $S$ into connected components of~$g|_S$                                                                                                \\
		$N/g$                                                    & Partition of $N$ into connected components of~$g$                                                                                                   \\
		$C$                                                      & A connected component ($C \in N/g$)                                                                                                                 \\
		$g_{-D}$                                                 & Network with all links incident to $D \subseteq N$ removed                                                                                          \\
		$g_{-j}$                                                 & Shorthand for $g_{-\{j\}}$ (isolate player~$j$)                                                                                                     \\
		$g \setminus \{i,j\}$                                    & Network with single link $\{i,j\}$ removed                                                                                                          \\
		$g_{-i,-j}$                                              & Shorthand for $g_{-\{i,j\}}$; equals $g_{-j,-i}$                                                                                                    \\[6pt]
		\multicolumn{2}{@{}l}{\textit{Worth functions (all $\to\mathbb{R}$) and allocation rules ($\to\mathbb{R}^N$)}}                                                                                                 \\[2pt]
		$\bar{v}: 2^N \to \mathbb{R}$                            & TU game: $\bar{v}(S)$ = worth of coalition $S\subseteq N$; for the Myerson value only $\bar{v}(C)$ at connected $C$ enters                          \\
		$(C,\mathcal{P})$                                        & Embedded coalition: partition~$\mathcal{P}$ of~$N$ with $C\in\mathcal{P}$                                                                           \\
		$v: \{(C,\mathcal{P})\} \to \mathbb{R}$                  & PFF game: $v(C,\mathcal{P})$ = worth of embedded coalition                                                                                          \\
		$w: \{(C,g):C\in N/g\} \to \mathbb{R}$                   & Network worth function: $w(C,g)$ = worth of component~$C$ in network~$g$                                                                            \\
		$\mathcal{W}$                                            & Space of all network worth functions                                                                                                                \\
		$\varphi: \mathcal{W}\times\mathcal{G} \to \mathbb{R}^N$ & Allocation rule; $\varphi_i(w,g)$ = payoff to player~$i$                                                                                            \\[4pt]
		\multicolumn{2}{@{}l}{\quad\textit{TU games}}                                                                                                                                                                  \\[2pt]
		$\bar{v}^g(S)$                                           & Graph-restricted TU game: $\sum_{D\in S/g} \bar{v}(D)$                                                                                              \\
		$\Sh$                                                    & Shapley value \citep{shapley1953}                                                                                                                   \\
		$\My$                                                    & Myerson value: $\Sh(\bar{v}^g)$ \citep{myerson1977b,myerson1980}                                                                                    \\[4pt]
		\multicolumn{2}{@{}l}{\quad\textit{Network worth functions (J-W and BCE); generalise TU games: $w(C,g)$ may depend on $g$}}                                                                                    \\[2pt]
		$w^g(C,h)$                                               & Graph-projected worth function: $\sum_{S\in C/g} w(S,\,g|_{N/h})$                                                                                   \\
		$\jw$                                                    & J-W value: $\jw_i(w,g)=\Sh_i(\bar{v}^{JW})$, $\bar{v}^{JW}(S)=\sum_{C\in S/g}w(C,g)$ \citep{jackson1996}                                            \\
		$\fce$                                                   & FCE rule: $\Phi^{\preccurlyeq}(v_{w^g})$, $v_{w^g}(C,\mathcal{P})=\sum_{S\in C/g}w(S,g|_{\mathcal{P}})$ \citep{navarro2007}                         \\
		$\bce$                                                   & BCE rule, see~\eqref{eq:bc-tree}--\eqref{eq:formula-full}                                                                                           \\[4pt]
		\multicolumn{2}{@{}l}{\quad\textit{PFF games}}                                                                                                                                                                 \\[2pt]
		$g|_{\mathcal{P}}$                                       & Network restricted to components of~$\mathcal{P}$: $\bigcup_{B\in\mathcal{P}}g|_B$                                                                  \\
		$g/\mathcal{P}$                                          & Partition of $N$ into $g$-components within each block of~$\mathcal{P}$: $\bigcup_{B\in\mathcal{P}} B/g$                                            \\
		$\Phi^{\preccurlyeq}$                                    & Myerson's PFFG value formula \citep{myerson1977a}                                                                                                   \\
		$\Phi^{\mathrm{EF}}$                                     & Externality-free value: $\Sh(\bar{v}^{\mathrm{EF}})$,\ $\bar{v}^{\mathrm{EF}}(S)=v\!\left(S,\{S\}\cup\{\{j\}:j\notin S\}\right)$ \citep{phamdo2007} \\
		$v^g(C,\mathcal{P})$                                     & Graph-restricted PFF game: $\sum_{S\in C/g} v(S,\, g/\mathcal{P})$ (Section~\ref{sec:pff})                                                          \\
		$\Psi^{\mathrm{FCE}}(v,g)$                               & FCE allocation rule for PFF games with a network: $\Phi^{\preccurlyeq}(v^g)$ \citep{feldman1996,navarro2007}                                        \\
		$\Psi^{\mathrm{BCE}}(v,g)$                               & BCE allocation rule for PFF games with a network: $\bce(w_{v},g)$ (Section~\ref{sec:pff})                                                           \\[6pt]
		\multicolumn{2}{@{}l}{\textit{Spanning tree and BCE construction}}                                                                                                                                             \\[2pt]
		$r = \min(C)$                                            & Root of the minimal-index BFS tree for component~$C$                                                                                                \\
		$p(j)$                                                   & Parent of non-root player~$j$ in the BFS tree                                                                                                       \\
		$\gamma_j$                                               & Offset for player~$j$ (equation~\eqref{eq:bc-tree})                                                                                                 \\[6pt]
		\multicolumn{2}{@{}l}{\textit{Cycle-sum identity}}                                                                                                                                                             \\[2pt]
		$i_0,\ldots,i_{\ell-1}$                                  & Players in a fundamental cycle of length~$\ell$                                                                                                     \\
		$e = \{i_a,i_{a+1}\}$                                    & Cycle edge                                                                                                                                          \\
		$D$                                                      & Subset of cycle players not in~$e$                                                                                                                  \\
		$R^{\varphi}_{w}(i,j;\, g')$                             & \BC residual for link~$\{i,j\}$ at network~$g'$ (equation~\eqref{eq:bc-residual})                                                                   \\
		\\
	\end{longtable}
}

\clearpage
\bibliographystyle{plainnat}

\end{document}